\definecolor{cadmiumgreen}{rgb}{0.0, 0.42, 0.24}
\definecolor{dark-blue}{rgb}{0.05,0.25,1}
\newcommand{\Oof}{\mathcal{O}}
\newcommand{\Cc}{\mathscr{C}}
\newcommand{\Tt}{\mathcal{T}}
\newcommand{\Pp}{\mathcal{P}}
\newcommand{\N}{\mathbb{N}}
\newcommand{\minor}{\preceq}
\begin{document}
\title{Constant round distributed domination on \newline graph classes with bounded expansion}
%
%
\author{Simeon Kublenz\inst{1} \and
Sebastian Siebertz\inst{1}\orcidID{0000-0002-6347-1198} \and\\
Alexandre Vigny\inst{1}\orcidID{0000-0002-4298-8876}}
\authorrunning{S.\ Kublenz, S.\ Siebertz and A.\ Vigny}
%
\institute{University of Bremen, Bremen, Germany\\
\email{\{kublenz,siebertz,vigny\}@uni-bremen.de}}
\maketitle              
%

\begin{abstract}
We show that the dominating set problem admits a constant factor
approximation in a constant number of rounds in the \mbox{LOCAL} model
of distributed computing on graph classes with bounded \mbox{expansion}.
This generalizes a result of Czygrinow et al.\ for graphs with excluded
topological minors.
\medskip

\textcolor{red}{We correct in error in \cref{lem:neighborhood-dom1} in 
the conference version of this paper.}

\keywords{Dominating set \and LOCAL algorithm \and Bounded expansion graph classes.}
\end{abstract}


\section{Introduction}

A dominating set in an undirected and simple graph $G$ is a set
$D\subseteq V(G)$ such that every vertex $v\in V(G)$ either belongs
to $D$ or has a neighbor in $D$. The \textsc{Minimum Dominating Set} problem takes as input a graph $G$ and the objective
is to find a minimum size dominating set of~$G$. The decision
problem whether a graph admits a dominating set of size $k$
is NP-hard~\cite{karp1972reducibility} and this even holds in
very restricted settings, e.g. on planar graphs of maximum degree
$3$~\cite{garey1979computers}.

Consequently, attention
shifted from computing exact solutions to approxi\-mating
near optimal dominating sets. The simple greedy algorithm computes
an $\ln n$ approximation (where $n$ is number of vertices
of the input graph)
of a minimum dominating set \cite{johnson1974approximation,lovasz1975ratio}, and for
general graphs this algorithm is near optimal -- it is NP-hard
to approximate minimum dominating sets within factor
$(1-\epsilon)\ln n$ for every $\epsilon>0$~\cite{dinur2014analytical}.

Therefore, researchers tried to identify restricted
graph classes where better (sequential) approximations are possible. The problem
admits a PTAS on classes with sub\-exponential expansion~\cite{har2017approximation}. Here, expansion refers to the edge
density of bounded depth minors, which we will define in
detail below. Important examples of classes with subexponential
expansion include the class of planar graphs and more generally
classes that exclude some fixed graph as a minor. The dominating
set problem admits a constant factor approximation on classes of
bounded degeneracy (equivalently, of bounded arboricity)~\cite{bansal2017tight,lenzen2010minimum}
and an $\Oof(\ln \gamma)$ approxi\-mation (where~$\gamma$ denotes the size
of a minimum dominating set) on classes of bounded VC-dimension~\cite{bronnimann1995almost,even2005hitting}. In fact, the greedy
algorithm can be modified to yield a constant factor approximation on
graphs with bounded degeneracy~\cite{jones2017parameterized} and an $\Oof(\ln \gamma)$
approximation on biclique-free graphs (graphs that exclude some fixed
complete bipartite graph $K_{t,t}$ as a subgraph)~\cite{siebertz2019greedy}. However, it is unlikely
that polynomial-time constant factor approximations exist even on
$K_{3,3}$-free graphs~\cite{siebertz2019greedy}.
The general goal in this line of research is to identify the broadest
graph classes on which the dominating set problem (or other important
problems that are hard on general graphs) can be approximated
efficiently with a certain guarantee on the approximation factor.
These limits of tractability are often captured by abstract notions, such
as expansion, degeneracy or VC-dimension of graph classes.

\medskip
In this paper we study the distributed time complexity of finding
dominating sets in the classic LOCAL model of distributed computing,
which can be traced back at least to the seminal work of Gallager,
Humblet and Spira~\cite{gallager1983distributed}. In this model, a
distributed system is modeled by an undirected (connected) graph~$G$,
in which every vertex represents a computational entity of the network and every edge represents a bidirectional communication channel. The vertices are equipped with unique identifiers.
In a distributed algorithm, initially, the nodes have no knowledge about
the network graph. They must then communicate and coordinate
their actions by passing messages to one another in order to achieve
a common goal, in our case, to compute a dominating set of the
network graph. The LOCAL model focuses on the aspects of
communication complexity and therefore the main measure for
the efficiency of a distributed algorithm is the number of communication
rounds it needs until it returns its answer.

Kuhn et al.~\cite{KuhnMW16} proved that in~$r$ rounds on an~$n$-vertex graphs of maximum degree
$\Delta$ one can approximate minimum dominating sets only within a factor $\Omega(n^{c/r^2}/r)$
and~$\Omega(\Delta^{1/(r+1)}/r)$, respectively, where~$c$ is a constant.
This implies that, in general, to achieve a constant approximation ratio,
we need at least $\Omega(\sqrt{\log
    n/\log \log n})$ and~$\Omega(\log \Delta/\log \log \Delta)$ communication rounds, respectively.
Kuhn et al.~\cite{KuhnMW16} also presented a~$(1+\epsilon)\ln \Delta$-approximation in that runs in $\Oof(\log(n)/\epsilon)$ rounds for any~$\epsilon>0$,
Barenboim et al.~\cite{barenboim2018fast}
presented a deterministic $\Oof((\log n)^{k-1})$-time algorithm that provides an
$\Oof(n^{1/k})$-approximation, for any integer parameter~$k \ge 2$.
More recently, the combined works of Rozhon, Ghaffari, Kuhn, and Maus~\cite{DBLP:conf/stoc/GhaffariKM17,DBLP:conf/stoc/RozhonG20}
provide an algorithm computing a $(1+\epsilon)$-approximation of the dominating set
in poly$(\log(n)/\epsilon)$ rounds~\cite[Corollary 3.11]{DBLP:conf/stoc/RozhonG20}. 

For graphs of degeneracy~$a$ (equivalent to arboricity up to factor $2$),
Lenzen and Wattenhofer~\cite{lenzen2010minimum}
provided an algorithm that achieves a factor~$\Oof(a^2)$ approximation
in randomized time~$\Oof(\log n)$, and a deterministic~$\Oof(a \log
\Delta)$ approximation algorithm
with $\Oof(\log \Delta)$ rounds. Graphs of bounded degeneracy include all graphs that exclude a fixed graph as a (topological) minor and in particular, all planar graphs and any class of bounded genus.

Amiri et al.~\cite{akhoondian2018distributed} provided a deterministic
$\Oof(\log n)$ time constant factor approximation algorithm on
classes of bounded expansion (which extends also to connected
dominating sets).
Czygrinow et al.~\cite{czygrinow2008fast} showed
that for any given~\mbox{$\epsilon>0$}, $(1+\epsilon)$-approximations of a maximum independent
set, a maximum matching, and a minimum dominating set, can be computed in
$\Oof(\log^* n)$ rounds in planar graphs, which is asymptotically optimal~\cite{lenzen2008leveraging}.

Lenzen et al.~\cite{lenzen2013distributed} proposed a constant factor
approximation on planar graphs that can be computed in a
constant number of communication rounds (see also~\cite{wawrzyniak2014strengthened}
for a finer analysis of the approximation factor).
Wawrzyniak~\cite{wawrzyniak2013brief} showed
that message sizes of $\mathcal{O}(\log n)$ suffice to give a
constant factor approximation on planar graphs in a constant number
of rounds.
In terms of lower bounds, Hilke et al.~\cite{hilke2014brief} showed that there is no
deterministic local algorithm (constant-time distributed graph algorithm) that
finds a~$(7-\epsilon)$-approximation of a minimum dominating set on
planar graphs, for any positive constant~$\epsilon$.

The results for planar
graphs were gradually extended to classes with bounded genus~\cite{akhoondian2016local,amiri2016brief}, classes with sublogarithmic expansion~\cite{amiri2019distributed} and eventually by Czygrinow et al.~\cite{czygrinow2018distributed} to classes with excluded topological minors.
Again, one of the main goals in this line of research is to find the most general
graph classes on which the dominating set problem admits a constant
factor approximation in a constant number of rounds.

\begin{figure}[h!]
\begin{center}
\begin{tikzpicture}

\node (bd-deg) at (11,-2.7) {\scriptsize\textit{bounded degree}};
\node[align=center] (topminor) at (8.5,-1.7) {\scriptsize\textit{excluded}\\[-2mm]\scriptsize\textit{topological}\\[-2mm] \scriptsize\textit{minor}};
\node[align=center] (sublog) at (4.5,-1.7) {\scriptsize\textit{subexponential}\\[-2mm]\scriptsize\textit{expansion}};
\node (bd-exp) at (6.5,-0.8) {\scriptsize\textbf{\textit{bounded expansion}}};
\node (degenerate) at (6.5,0) {\scriptsize\textit{bounded degeneracy}};
\node (planar) at (6.5,-4.3) {\scriptsize\textit{planar}};
\node (genus) at (6.5,-3.5) {\scriptsize\textit{bounded genus}};
\node (minor) at (6.5,-2.7) {\scriptsize\textit{excluded minor}};


\draw[->,>=stealth] (planar) to (genus);
\draw[->,>=stealth] (genus) to (minor);
\draw[->,>=stealth] (minor) to (topminor);
\draw[->,>=stealth] (topminor) to (bd-exp);
\draw[->,>=stealth] (bd-deg) to (topminor);
\draw[->,>=stealth] (minor) to (sublog);
\draw[->,>=stealth] (sublog) to (bd-exp);
\draw[->,>=stealth] (bd-exp) to (degenerate);

\end{tikzpicture}
\end{center}
\caption{Inclusion diagram of the mentioned graph classes. }
\end{figure}
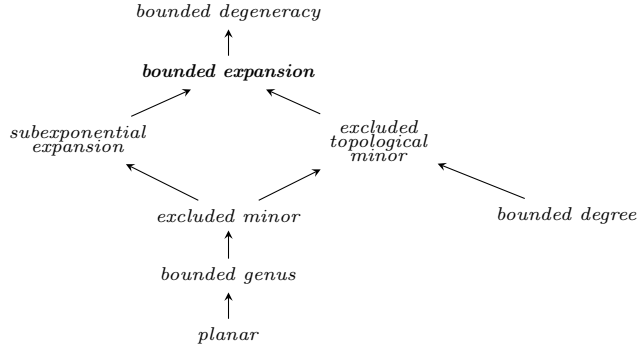\label{fig:classes}

\vspace{-3mm}
We take a step towards this goal and generalize the result of
Czygrinow et al.~\cite{czygrinow2018distributed} to classes of bounded
expansion. The notion of bounded expansion was introduced
by Ne\v{s}et\v{r}il and Ossona de Mendez~\cite{nevsetvril2008grad} and
offers an abstract definition of uniform sparseness in graphs. It is based on bounding the density of shallow minors. Intuitively, while
a minor is obtained by contracting arbitrary connected subgraphs of a graph
to new vertices, in an $r$-shallow minor we are only allowed to contract
connected subgraphs of radius at most~$r$.

A class of graphs has
bounded expansion if for every radius $r$ the set of all \mbox{$r$-shallow}
minors has edge density bounded by a constant depending only on~$r$.
We write $\nabla_r(G)$ for the maximal edge density of an
$r$-shallow minor of a graph~$G$.
Of course, every class~$\Cc$ that excludes a fixed graph $H$ as
a minor has bounded expansion. For such classes there exists an
absolute constant
$c$ such that for all $G\in\Cc$ and all~$r$
we have $\nabla_r(G)\leq c$.
Special cases are the class of
planar graphs, every class of graphs that can be drawn
with a bounded number of crossings, and every class of graphs
that embeds into a fixed surface.
Every class of intersection graphs of low density objects in low
dimensional Euclidean space has polynomial expansion, that is, the function~$\nabla_r$ is bounded polynomially in $r$ on $\Cc$. Also
every class $\Cc$ that excludes a fixed graph $H$ as
a topological minor has bounded expansion.
Important special cases are classes of
bounded degree and classes of graphs that can be drawn
with a linear number of crossings
Further examples include
classes with bounded queue-number, bounded stack-number or bounded
non-repetitive chromatic number
and the class of Erd\"os-R\'enyi random graphs with
constant average degree $d/n$, $G(n,d/n)$, has
asymptotically almost surely bounded expansion. See \cite{har2017approximation,nevsetvril2012characterisations} for all
these examples.

Hence, classes of bounded expansion are much more general than
classes excluding a topological minor. On the other hand, maybe
not surprisingly, when performing local
computations, it is not properties of minors or topological minors, but
rather of shallow minors that allow the necessary combinatorial arguments
in the algorithms. This observation was already made in the study of the kernelization complexity of dominating set on classes of sparse graphs \cite{DrangeDFKLPPRVS16,eiben2019lossy,EickmeyerGKKPRS17,FabianskiPST19,kreutzer2018polynomial}.
On the other hand, degenerate classes
are those classes where only $\nabla_0(G)$ is bounded.
These classes are hence more general than classes of bounded
expansion. See Fig.~\ref{fig:classes} for an inclusion diagram of the
mentioned classes. 

The algorithm of Czygrinow et al.~\cite{czygrinow2018distributed} is
based on an quite complicated iterative process of choosing dominating
vertices from so called
\emph{pseudo-covers}. Based on the fact that classes with excluded topological minors in particular exclude some complete bipartite graph
$K_{t,t}$ as a subgraph it is proved that this iterative process terminates
after at most $t$ rounds and
produces a good approximation of a minimum dominating set.

In this paper we make three contributions. First, we simplify the
arguments used by Czygrinow et al.\ and give a much more accessible
description of their algorithm. Second, we identify the property that $\nabla_1(G)$ is
bounded by a constant as the key property that makes the algorithm
work. Classes with only this restriction are
even more general than bounded expansion classes, hence, we generalize
the algorithm to the most general classes on which it (and similar
approaches based on covers or pseudo-covers)
can work. We demonstrate that the pseudo-covering method cannot
be extended e.g.\ to classes of bounded degeneracy. Finally,
Czygrinow et al.\ explicitly stated that they did not aim to
optimize any constants, and as presented, the constants in their
construction are enormous. We optimize the bounds that arise in
the algorithm in terms of $\nabla_1(G)$.
Even though the constants
are still large, they are by magnitudes smaller than those in the
original presentation.

\begin{theorem}
There exists a LOCAL algorithm that for any given graph $G$ and
an upper bound on $\nabla_1(G)$ as input
computes in a constant number of rounds a dominating set
of size $\mathcal{O}(\nabla_1(G)^{4t\nabla_1(G)+t})\cdot \gamma(G)$,
where $t\leq 2\nabla_1(G)+1$ is minimum such that $K_{t,t}\not\subseteq G$.
\end{theorem}

\pagebreak
Before we go into the technical details let us give an overview of the
algorithm. The algorithm works in three steps, in each step ($i\in \{1,2,3\}$) computing a small set $D_i$ that is added to the dominating set.

\begin{enumerate}
\item Compute the set $D_1$ of all $v$ such that $N(v)$ cannot be
dominated by a small number (the constant $2\nabla_1(G)$) of vertices different from $v$.
Remove~$D_1$ from~$G$ and mark all its neighbors as dominated.
The fact that $|D_1|$ is linearly bounded in~$\gamma(G)$ goes back to work
of~\cite{lenzen2013distributed} and we prove our bounds in \cref{lem:neighborhood-dom1}.
\item In parallel for every vertex $v=v_1$ we compute all so called
\emph{domination sequences $v_1,\ldots, v_s$}, defined formally
in \cref{def:dom-sequence}. This step is based on the construction of
pseudo-covers as in the work of Czygrinow et al.~\cite{czygrinow2018distributed}.
We add all vertices $v_s$ to the set
$D_2$. We prove that this set is small compared to~$\gamma(G)$ in \cref{lem:small-D-hat}. Remove $D_2$ from $G$ and mark its neighbors as
dominated.
\item All remaining vertices have small degree, as proved in \cref{crl:d3}, and hence
in a final step we can add all non-dominated vertices to a set $D_3$. We finally
return the set $D_1\cup D_2\cup D_3$.
\end{enumerate}

The main
open question that remains in this line of research is whether we can
compute constant factor approximations of minimum dominating sets
in a constant number of rounds in classes of bounded degeneracy.


\section{Preliminaries}

In this section we fix our notation and prove some basic lemmas
required for the algorithm. We use standard notation from graph
theory and refer to the literature for extensive background. For an
undirected and simple graph~$G$ we denote by $V(G)$
the vertex set and by $E(G)$ the edge set of $G$. We also refer
to the literature, for the
formal definition of the LOCAL model of distributed computing.

A graph~$H$ is a minor of a graph~$G$, written~$H\minor G$, if
there is a set \mbox{$\{G_v :v\in V(H)\}$} of pairwise vertex disjoint and
connected subgraphs
$G_v\subseteq G$ such that if~$\{u,v\}\in E(H)$, then there is an edge
between a vertex of~$G_u$ and a vertex of~$G_v$. We call $V(G_v)$ the
\emph{branch set} of $v$ and say that it is
\emph{contracted} to the vertex~$v$.

For a non-negative integer~$r$, a graph
$H$ is an \emph{$r$-shallow minor} of $G$, written
$H\minor_r G$, if there is a set~$\{G_v : v\in V(H)\}$ of pairwise
vertex disjoint connected subgraphs
$G_v\subseteq G$ of radius at most $r$ such that if~$\{u,v\}\in E(H)$,
then there is an edge between a vertex of~$G_u$ and a vertex of~$G_v$.
Observe that a $0$-shallow minor of $G$ is just a subgraph of $G$.

We write $\nabla_r(G)$ for $\max_{H\minor_r G}|E(H)|/|V(H)|$. Observe
that $\nabla_0(G)$ denotes the maximum average edge density of $G$
and $2\nabla_0(G)$ bounds the degeneracy of~$G$, which is defined
as $\max_{H\subseteq G}\delta(H)$. Here, $\delta(H)$ denotes
the minimum degree of~$H$.
A class~$\Cc$ of graphs has \emph{bounded expansion} if there is a function
$f:\N\rightarrow\N$ such that $\nabla_r(G)\leq f(r)$ for all graphs $G\in \Cc$.
This is equivalent to demanding that the degeneracy of each $r$-shallow minor
of $G$ is functionally bounded by~$r$.

We write~$K_{s,t}$ for the complete bipartite
graph with partitions of size~$s$ and~$t$, respectively. Observe that
$K_{t,t}$ has $2t$ vertices and $t^2$ edges, hence, if
\mbox{$\nabla_0(G)< t/2$}, then $G$ excludes $K_{t,t}$ as a subgraph.

For a graph $G$ and $v\in V(G)$ we write $N(v)=\{w~:~\{v,w\}\in E(G)\}$
for the \emph{open neighborhood} of $v$ and $N[v]=N(v)\cup\{v\}$ for
the \emph{closed neighborhood} of~$v$. For a set $A\subseteq V(G)$ let
$N[A]=\bigcup_{v\in A}N[v]$. We write $N_r[v]$ for the set
of vertices at distance at most $r$ from a vertex $v$.
A dominating set in a graph~$G$ is a set
$D\subseteq V(G)$ such that $N[D]=V(G)$. We write $\gamma(G)$ for
the size of a minimum dominating set of $G$. For $W\subseteq V(G)$
we say that a set $Z\subseteq V(G)$ \emph{dominates} or \emph{covers} or
is a \emph{cover} of $W$ if $W\subseteq N[Z]$.
Observe that we do not
require $Z\cap W=\emptyset$ as Czygrinow et al.\ do for covers.

\smallskip
The following lemma is one of the key lemmas used for the algorithm. It goes back to~\cite{lenzen2013distributed}.

\begin{lemma}\label{lem:neighborhood-dom1}
Let $G$ be a graph. Then there are less than $(\nabla_1(G)+2)\gamma(G)$ vertices~$v$ with the property that $N(v)$ cannot be dominated by at most $2\nabla_1(G)$ vertices different from $v$.
\end{lemma}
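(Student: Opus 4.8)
The plan is to charge the \emph{bad} vertices---those $v$ for which $N(v)$ has no cover of size at most $2\nabla_1(G)$ by vertices different from $v$---against a fixed minimum dominating set $D$, with $|D|=\gamma(G)$, and to control this charging through the edge density of a $1$-shallow minor. First I would dispose of the bad vertices lying in $D$: there are at most $\gamma(G)$ of them, so it suffices to show that the set $B'$ of bad vertices \emph{outside} $D$ has size less than $\gamma(G)$.

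The decisive observation is that $D$ already covers $N(v)$ for every $v$. Setting $D_v=\{d\in D: N[d]\cap N(v)\neq\emptyset\}$, every neighbour of $v$ is dominated by some member of $D_v$, so $D_v$ covers $N(v)$; and for $v\in B'$ we have $v\notin D$, hence $v\notin D_v$. Thus $D_v$ is a cover of $N(v)$ avoiding $v$, and badness forces $|D_v|>2\nabla_1(G)$. Summing over $B'$ gives $\sum_{v\in B'}|D_v|>2\nabla_1(G)\cdot|B'|$, so it remains to bound this incidence count from above.

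For the upper bound I would realise the incidence structure between $B'$ and $D$ as a $1$-shallow minor $H$ of $G$ on the vertex set $B'\cup D$: each dominator $d$ keeps the singleton branch set $\{d\}$, and each $v\in B'$ is grown into a star centred at $v$ that reaches, for every $d\in D_v$, a witness $w\in N(v)\cap N[d]$ certifying the incidence. Such a star has radius at most $1$ and becomes adjacent to precisely the branch sets of the dominators in $D_v$, so $|E(H)|=\sum_{v\in B'}|D_v|$ while $|V(H)|=|B'|+\gamma(G)$. Since every $1$-shallow minor satisfies $|E(H)|\le\nabla_1(G)\,|V(H)|$, the two bounds combine to $2|B'|<|B'|+\gamma(G)$, whence $|B'|<\gamma(G)$ and the total number of bad vertices is at most $\gamma(G)+|B'|<2\gamma(G)$.

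The step I expect to be the genuine obstacle is guaranteeing that these branch sets can be chosen pairwise disjoint and disjoint from $D$, so that $H$ really is a minor rather than a mere incidence graph. A single vertex $w$ may simultaneously be a natural witness for two distinct bad vertices, or may itself be a dominator or another bad vertex, and the distance-$2$ dominators (those $d\in D_v$ with $d\notin N(v)$) are exactly the ones that force witnesses into the stars. To manage this I would first route all distance-$1$ dominators by direct edges, needing no witnesses, and for the remainder pass to an inclusion-minimal subcover of $N(v)$ from $D$: minimality equips each selected dominator with a private neighbour in $N(v)$, which keeps the witnesses \emph{within} a single star distinct and reduces the task to separating witnesses \emph{across} different stars. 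Handling these remaining collisions cleanly, without spoiling the radius-$1$ bound or the disjointness of the contracted subgraphs, is the technical heart of the argument.
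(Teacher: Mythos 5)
Your overall strategy coincides with the paper's: charge the bad vertices against a fixed minimum dominating set $D$ and derive a contradiction from the density bound $|E(H)|\le\nabla_1(G)\,|V(H)|$ for a $1$-shallow minor $H$ built on the bad vertices and $D$. The quantitative skeleton is also right (each bad vertex outside $D$ sees more than $2\nabla_1(G)$ dominators incident to its neighbourhood, and the two bounds on $|E(H)|$ give $|B'|<\gamma$). But the step you flag as ``the technical heart'' is a genuine gap, not a routine verification, and your construction as stated does not close it. If you grow each bad vertex $v$ into a star through witnesses $w\in N(v)\cap N[d]$, then two bad vertices with overlapping (or identical) neighbourhoods compete for the same witnesses; an inclusion-minimal subcover provides private neighbours only \emph{within} one star and does nothing about collisions \emph{across} stars. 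In the extreme case where many bad vertices share the same neighbourhood, pairwise disjoint radius-$1$ stars realizing all the incidences simply do not exist, so the identity $|E(H)|=\sum_{v\in B'}|D_v|$ is not available and the lower bound on $|E(H)|$ collapses.

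The missing idea is to dualize the construction: keep each bad vertex as a \emph{singleton} branch set and absorb the witnesses into the dominators' branch sets instead. The paper (after reducing, as you do, to $\gamma$ bad vertices disjoint from $D$) takes $A_i=\{a_i\}$ and $D_i=N[d_i]\setminus\bigl(\{a_1,\dots,a_\gamma\}\cup\bigcup_{j<i}N[d_j]\cup\{d_{i+1},\dots,d_\gamma\}\bigr)$. Each potential witness is thereby assigned to at most one dominator, so disjointness is automatic; every $D_i$ has radius at most $1$ around $d_i$; and the adjacency between $A_i$ and $D_j$ is witnessed by the edge from $a_i$ to whichever vertex of $N(a_i)\cap N[d_j]$ landed in $D_j$. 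Since the dominators whose branch sets meet $N(a_i)$ still cover $N(a_i)$, badness forces each $a_i$ to have degree at least $2\nabla_1(G)+1$ in $H$, and the density bound yields the contradiction. With this one change of which side absorbs the witnesses, your argument goes through; without it, it does not.
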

\begin{proof}
Let $\gamma=\gamma(G)$ and $\nabla_1=\nabla_1(G)$ and assume that there are $(\nabla_1+2)\gamma$ such vertices $a_1,\ldots,a_{(\nabla_1+2)\gamma}$. We proceed towards a contradiction.
  Let $\{d_1,\ldots,d_\gamma\}$ be a minimum dominating set. At least $\gamma$ of the $a_i$'s are not in this dominating set. We can hence assume w.l.o.g.~that $\{a_1,\ldots,a_{(\nabla_1+1)\gamma}\}$ and $\{d_1,\ldots,d_\gamma\}$ are two disjoint sets of vertices.

We build a $1$-shallow minor $H$ of the graph $G$ with the following $(\nabla_1+2)\gamma$ branch sets. For every $i\le (\nabla_1+1)\gamma$, we have a branch set
$A_i=\{a_i\}$ and a branch set $D_i=N[d_i]\setminus (\{a_1,\ldots, a_{(\nabla_1+1)\gamma}\}\cup \bigcup_{j<i}N[d_j] \cup \{d_{i+1},\ldots, d_\gamma\})$.
We call the associated vertices of $H$ $a_1',\ldots, a_{(\nabla_1+1)\gamma}',d_1',\ldots,
d_\gamma'$.

We now count the edges of $H$.
Since $\{d_1,\ldots, d_\gamma\}$ is a dominating set of~$G$ and by assumption on $N(a_i)$, we have that in $H$, every $a'_i$ has degree at least $2\nabla_1+1$.
The number of edges having both end points among the $a_i'$ is bounded by $(\nabla_1+1)\gamma\nabla_0(G)$, end hence by $(\nabla_1+1)\gamma\nabla_1$.
  We therefore get that $|E_H|\ge(\nabla_1+1)\gamma(2\nabla_1+1) - (\nabla_1+1)\gamma\nabla_1 \ge (\nabla_1+1)^2\gamma$.
  Since $|V_H|=(\nabla_1+2)\gamma$, we have that
  $\frac{|E_H|}{|V_H|} \ge \frac{(\nabla_1+1)^2\gamma}{(\nabla_1+2)\gamma}\ge \frac{(\nabla_1+1)^2}{(\nabla_1+2)} > \nabla_1$, a~contradiction.
\end{proof}

Note that we cannot locally determine the number $\nabla_1(G)$.
We must hence assume that it is given with the input. Observe that
similarly, the algorithm of Czygrinow et al.\ works with the assumption
that the input excludes a complete graph with $t$ vertices as a topological
minor. This implies a bound on the edge density of topological minors
in $G$, which can be seen as being given with the input.

The algorithm proceeds in three phases. The first phase is
based on \cref{lem:neighborhood-dom1} as follows.
In the LOCAL model we can learn the distance-$2$
neighborhood~$N_2[v]$ of every vertex $v$ in $2$ rounds,
and then locally check
whether~$N(v)$ can be domi\-nated by at most $2\nabla_1(G)$
vertices.

\begin{tcolorbox}
We let $D_1$ be the set of all vertices that do not have this
property. By \cref{lem:neighborhood-dom1}
we have $|D_1|\leq (\nabla_1(G)+2)\gamma(G)$. We remove $D_1$ from the
graph and mark all its neighbors as dominated in one additional round.
\end{tcolorbox}

In the following we fix a graph $G$ and we assume that $N(v)$ can be
dominating by at most $2\nabla_1(G)$ vertices different from $v$
for all $v\in V(G)$. We write $\nabla_1$ for~$\nabla_1(G)$ and
we let $t\leq 2\nabla_0(G)+1$ be the smallest positive integer
 such that~$G$ excludes
$K_{t,t}$ as a subgraph. Note that this number is not required
as part of the input. We let $k\coloneq 2\nabla_1$.

\begin{example}
  A planar $n$-vertex graph has at most $3n-6$ edges. A minor of a
  planar graph is again planar, hence for planar graphs $G$ we have $\nabla_r(G) \leq 3$ for all $r\geq 0$ and $k=2\nabla_1(G)\leq 6$.
\end{example}

We also fix a minimum dominating set $D$ of $G$
of size~$\gamma$.
The following lemma is proved exactly as \cref{lem:neighborhood-dom1}.

\begin{lemma}\label{lem:neighborhood-dom2}
There are less than $(\nabla_1+2)\gamma$ vertices $v$ with the property
that $N(v)$ cannot be dominated by at most $2\nabla_1$ vertices
from $D$ and different from $v$.
\end{lemma}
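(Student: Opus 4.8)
The plan is to reproduce the argument of \cref{lem:neighborhood-dom1} essentially verbatim, the only change being that the minimum dominating set driving the construction is taken to be the fixed set $D$ rather than an arbitrary one. So I would argue by contradiction: suppose there are $2\gamma$ vertices $a_1,\dots,a_{2\gamma}$ whose open neighborhoods cannot be dominated by $2\nabla_1$ vertices of $D$ different from the respective $a_i$. Since $|D|=\gamma$, at least $\gamma$ of these vertices lie outside $D$, so after relabeling I may assume that $\{a_1,\dots,a_\gamma\}$ and $D=\{d_1,\dots,d_\gamma\}$ are disjoint.

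Next I would build the same $1$-shallow minor $H$ as in \cref{lem:neighborhood-dom1}: the $2\gamma$ branch sets are $A_i=\{a_i\}$ together with $D_i=N[d_i]\setminus(\{a_1,\dots,a_\gamma\}\cup\bigcup_{j<i}N[d_j]\cup\{d_{i+1},\dots,d_\gamma\})$ for $i\le\gamma$, contracted to vertices $a_1',\dots,a_\gamma',d_1',\dots,d_\gamma'$. The decisive step is the degree bound: because $D$ dominates $G$, every neighbor of $a_i$ is covered by some $d_j$, and the vertices $d_j$ whose branch sets meet $N(a_i)$ form a subset of $D$ that covers $N(a_i)$; by the hypothesis on $a_i$ (and since $a_i\notin D$) there must be more than $2\nabla_1$ of them, so each $a_i'$ has degree at least $2\nabla_1+1$ in $H$. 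This yields $|V(H)|=2\gamma$ and $|E(H)|\ge\gamma(2\nabla_1+1)>|V(H)|\,\nabla_1$, contradicting $H\minor_1 G$.

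The reason this is literally the proof of \cref{lem:neighborhood-dom1} is that there the high-degree witnesses of each $a_i'$ are precisely the contracted vertices $d_j'$ coming from the chosen minimum dominating set; that proof therefore already certifies the stronger statement that $N(a_i)$ needs more than $2\nabla_1$ vertices \emph{from that dominating set}. Fixing the dominating set to be $D$ turns this into exactly the present claim, so no genuinely new idea is required.

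The main obstacle, exactly as in \cref{lem:neighborhood-dom1}, is the bookkeeping needed to confirm that the $D_i$ are pairwise disjoint and connected of radius at most $1$ (so that $H$ is a genuine $1$-shallow minor) and that the dominators counted in the degree bound are indeed members of $D$. The latter is the only point that distinguishes this lemma from \cref{lem:neighborhood-dom1} conceptually, but it holds automatically because the branch sets $D_i$ are grown around the vertices $d_i\in D$, so that every edge from an $a_i'$ to a $d_j'$ corresponds to the $D$-vertex $d_j$ covering a neighbor of $a_i$.
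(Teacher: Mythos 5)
Your proposal is correct and matches the paper exactly: the paper's entire proof of this lemma is the remark that it ``is proved exactly as \cref{lem:neighborhood-dom1}'', and you carry out precisely that re-run of the shallow-minor construction with the dominating set fixed to be $D$, correctly observing that the degree bound in \cref{lem:neighborhood-dom1} already counts dominators from the chosen dominating set. No further comparison is needed.
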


Unfortunately, we cannot determine these vertices locally, as it requires
know\-ledge of $D$, however, this structural property is very useful for
our further argumentation.

\begin{tcolorbox}
Denote by $\hat D$ the set of all vertices $v$
whose neighborhood cannot be dominated
by $2\nabla_1$ vertices of $D$ different from $v$.
Let $D'=D\cup \hat D$.
\end{tcolorbox}

According to 
\cref{lem:neighborhood-dom2}, $D'$ contains at most
$(\nabla_1+2)\gamma$ vertices.
Let us stress that~$D'$ will never be computed by our LOCAL algorithm. We only use
its existence in the correctness proofs.

\smallskip
We can apply these
lemmas to obtain a constant factor approximation for a dominating
set only if $\nabla_1(G)$ is bounded by a constant. For example in graphs of bounded degeneracy in general the number of vertices that dominate the
neighborhood of a vertex can only be bounded by $\gamma(G)$.
Hence, the approach based on covers and pseudo-covers that is employed
in the following cannot be extended to degenerate  graph classes.

\begin{example}
Let $G(\gamma,m)$ be the graph with vertices $v_i$ for $1\leq i\leq \gamma$,
$w^j$ for $1\leq j\leq m$ and $s_i^j$ for $1\leq i\leq \gamma, 1\leq j\leq m$
(see Figure~\ref{fig:deg}).
We have the edges $\{v_1, w^j\}$ for $1\leq j\leq m$, hence $v_1$
dominates all $w^j$. We have the edges $\{w^j, s_i^j\}$ for all $1\leq i\leq \gamma,
1\leq j\leq m$, hence, the $s_i^j$ are neighbors of $w^j$. Finally,
we have the edges $\{v_i, s_i^j\}$, that is, $v_i$ dominates the $i$th
neighbor of $w^j$. Hence, for $m>\gamma$,
$G(\gamma, m)$ has a dominating set of size
$\gamma$ and $m$ vertices whose neighborhood can be dominated
only by $\gamma(G)$ vertices. \cref{lem:neighborhood-dom1} implies
that $\gamma < 2\nabla_1$, and as we can choose~$m$ arbitrary
large, we cannot usefully apply \cref{lem:neighborhood-dom1}.
Furthermore, $G(\gamma,m)$ is
\mbox{$2$-degenerate}, showing that these methods cannot be applied on
degenerate graph classes.
\begin{center}
  \begin{figure}
    \includegraphics[scale=0.3]{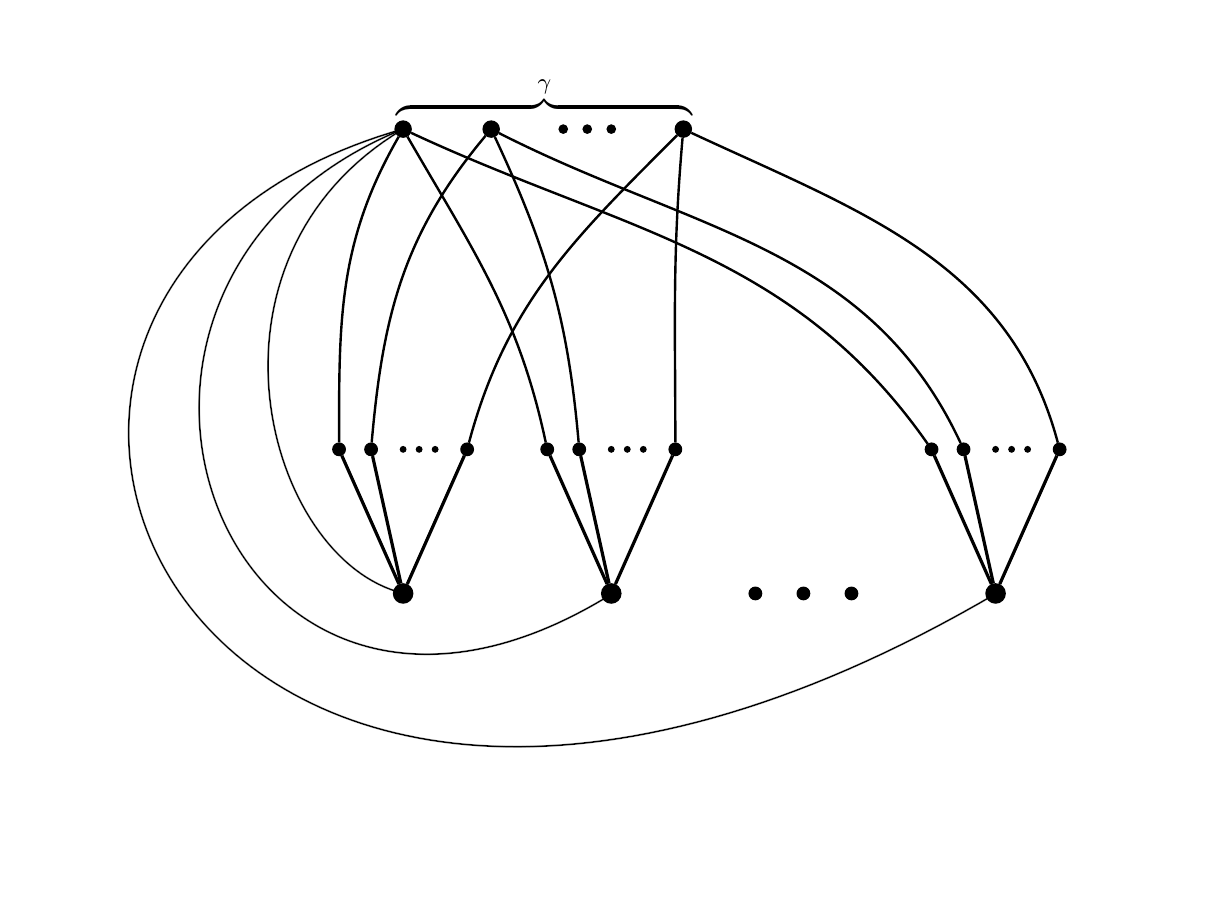}
    \label{fig:deg}
    \caption{ A $2$-degenerate graph where for many $v\in V(G)$ the set $N(v)$ can only be dominated by at least $\gamma$ vertices different from $v$. }
  \end{figure}
\end{center}

\end{example}

\vspace{-1.2cm}

\section{Covers and pseudo-covers}

Intuitively, the vertices from a cover of a set $W$ can
take different roles. A few vertices of a cover may cover almost the
complete set $W$, while a few others are only there to cover what was
left over. The key observation of Czygrinow et al.\ is that in classes that
exclude some~$K_{t,t}$ as a subgraph, there can only be few of such
high degree covering vertices, while there can be arbitrarily many vertices
that cover at most $t-1$ vertices of~$W$ (the same vertices can be covered
over and over again). This observation can be applied recursively and is
distilled into the following two definitions. Recall that by the processing
carried out in the first phase of the algorithm
we know that every neighborhood $N(v)$ can be covered by $k=2\nabla_1$ vertices different from $v$.
We recall all fixed parameters for easy to find reference. \vspace{-0cm}

\begin{tcolorbox}
\begin{tabular}{l l}
- $G$ & :~ fixed graph. \\
- $\gamma$ & :~ $\gamma(G)$.\\
- $\nabla_1$ & :~ $\nabla_1(G)$. \\
- $t$ & :~ smallest integer such that $G$
excludes $K_{t,t}$ as a subgraph\\
- $D_1$ & :~ defined and computed in \cref{lem:neighborhood-dom1}.\\
- $D$ & :~ fixed dominating set of $G$ of size $\gamma$ (not computed).\\
- $\hat{D}$ & :~ defined in \cref{lem:neighborhood-dom2} (not computed).\\
- $D'$ & :~ $D \cup \hat{D}$ (not computed).
\end{tabular}
\end{tcolorbox}


Following the presentation of~\cite{czygrinow2018distributed}, we name and fix
these constants for the rest of this article.

\begin{tcolorbox}
\begin{tabular}{l l}
- $k$       & $\coloneqq~ 2\nabla_1$.\\
- $\alpha$  & $\coloneqq~ 1/k$.\\
- $\ell$    & $\coloneqq~ 8\nabla_1/\alpha^2+1=4k^3+1$.\\
- $q$       & $\coloneqq~ 4k^4$.
\end{tabular}
\end{tcolorbox}

\begin{definition}
A vertex $z\in V(G)$ is \emph{$\alpha$-strong} for a vertex set $W\subseteq V(G)$ if $|N[z]\cap W|\geq \alpha|W|$.
\end{definition}

The following is the key definition by Czygrinow et al.~\cite{czygrinow2018distributed}.

\begin{definition}
A \emph{pseudo-cover} (with parameters $\alpha, \ell, q, k$)
of a set $W\subseteq V(G)$ is a
sequence $(v_1,\ldots, v_m)$ of vertices
such that  for every $i$ we have
\begin{itemize}
\item $|W\setminus \bigcup_{j\le m}N[v_j]|\leq q$,
\item $v_i$ is $\alpha$-strong for $W\setminus\bigcup_{j<i}N[v_j]$,
\item $|N[v_i]\cap (W\setminus\bigcup_{j<i} N[v_j])|\geq \ell$,
\item $m\leq k$.
\end{itemize}
\end{definition}
Intuitively, all but at most $q$ elements of the set $W$ are covered by the $(v_i)_{i\le m}$.
Additionally, each element of the pseudo-cover dominates both an
$\alpha$-fraction of what remains to be dominated, and at least $\ell$ elements.
Note that with our choice of constants, if there are more than $q$ vertices not
covered yet, any vertex that covers an $\alpha$-fraction of what remains also
covers at least $\ell$ elements.

The next lemma shows how to derive the existence of pseudo-covers from
the existence of covers.

\begin{lemma}\label{lem:cover-to-pseudo-cover}
Let $W\subseteq V(G)$ be of size at least
$q$ and let~$Z$ be a cover of $W$ with~$k$ elements.
There exists an ordering of the vertices of $Z$ as $z_1,\ldots, z_k$
and $m\leq k$ such that $(z_1,\ldots, z_m)$ is a pseudo-cover of $W$.
\end{lemma}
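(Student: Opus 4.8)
The plan is to build the required ordering greedily, always peeling off the vertex of $Z$ that covers the largest part of the still-uncovered portion of $W$, and to stop as soon as that portion has shrunk below the threshold $q$. Concretely, I would maintain the residual set $W_i = W \setminus \bigcup_{j<i} N[v_j]$, starting from $W_1 = W$. At step $i$, as long as $|W_i| > q$ and some vertex of $Z$ remains unused, I pick $v_i$ among the unused vertices of $Z$ maximising $|N[v_i]\cap W_i|$, set $W_{i+1}=W_i\setminus N[v_i]$, and continue. The process halts at the first index where $|W_i|\le q$, or once all $k$ vertices of $Z$ have been used, and I let $m$ be the number of vertices chosen.

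Verifying the four bullets of the pseudo-cover definition is then largely bookkeeping. For the $\alpha$-strong condition, the key observation is that the unused vertices of $Z$ still cover all of $W_i$: every vertex of $W_i$ lies in $N[Z]$ but, by definition of $W_i$, in none of $N[v_1],\dots,N[v_{i-1}]$, hence in the closed neighbourhood of some unused vertex. Since at most $k$ vertices are unused, a pigeonhole argument yields one whose closed neighbourhood meets $W_i$ in at least $|W_i|/k=\alpha|W_i|$ vertices, and the greedily chosen $v_i$ does at least as well, so it is $\alpha$-strong. The bound $m\le k$ is immediate from the stopping rule, and the residual bound $|W\setminus\bigcup_{i\le m}N[v_i]|\le q$ follows because we either stop exactly when the residual set drops to at most $q$, or we exhaust $Z$, in which case the residual set is empty since $Z$ covers $W$.

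The step I expect to be the crux is the third bullet, $|N[v_i]\cap W_i|\ge \ell$, since this is where the calibration of the parameters $q=4k^4$ and $\ell=4k^3+1$ is used. Whenever a vertex $v_i$ is selected, the stopping rule guarantees $|W_i|>q$, and combining this with $\alpha$-strength gives $|N[v_i]\cap W_i|\ge |W_i|/k > q/k = 4k^3 = \ell-1$; as the left-hand side is an integer, it is in fact at least $\ell$. Thus the relation $q=k(\ell-1)$ is precisely what forces each selected vertex to contribute at least $\ell$ newly covered vertices, while the bound $|Z|=k$ guarantees termination within $k$ steps. The only degenerate case to mention is $|W|=q$, where the empty sequence already satisfies all conditions vacuously.
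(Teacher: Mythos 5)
Your proof is correct and rests on the same greedy core as the paper's: order the vertices of $Z$ by how much of the current residual set $W_i$ their closed neighborhoods cover, and observe that the as-yet-unused vertices of $Z$ still cover $W_i$, so by pigeonhole the greedily chosen vertex is $\alpha$-strong. The difference lies in how $m$ is fixed and which bullets are enforced versus derived. The paper orders all of $Z$ first, defines $m$ as the longest prefix satisfying the $\alpha$-strength and $\ell$-conditions, and then argues a posteriori that the leftover set $W'$ is small; its closing step asserts $\ell = q/k$, whereas with the paper's own parameters $q/k = 4k^3 = \ell-1$, so that computation literally gives only $|W'| < k\ell = q+k$. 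Your version instead stops as soon as the residual drops to size at most $q$ (or $Z$ is exhausted, in which case the residual is empty), which makes the first bullet hold by construction, and then derives the $\ell$-condition from $\alpha$-strength together with $|W_i|>q$ and integrality, i.e.\ from the calibration $q = k(\ell-1)$. This reorganization is sound and in fact slightly cleaner on the arithmetic than the paper's own write-up. The only cosmetic gap is that the lemma formally asks for an ordering of all of $Z$; you should append the unused vertices of $Z$ in arbitrary order after $z_m$, which changes nothing.
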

\begin{proof}
 We build the order greedily by induction. We order the elements by neighborhood size, while removing the neighborhoods of the previously ordered vertices. More precisely, assume that $(z_1,\ldots,z_i)$ have been defined for \mbox{some~$i\ge 0$}. We then define $z_{i+1}$ as the element that maximizes $|N[z] \cap (W \setminus \bigcup_{j\le i}N[z_j])|$.

 Once we have ordered all vertices of $Z$, we define $m$ as the maximal integer not larger than $k$ such that for every $i \le m$ we have:
 \begin{itemize}
   \item $z_i$ is $\alpha$-strong for $W\setminus\bigcup_{j<i}N[z_j]$, and
   \item $|N[z_i] \cap (W \setminus \bigcup_{j\le i}N[z_j])| \ge \ell$.
 \end{itemize}

This made sure that $(z_1,\ldots, z_m)$ satisfies the last 3 properties of a
 pseudo-cover of $W$. It only remains to check the first one.
 To do so, we define $W' \coloneq W \setminus\bigcup_{i\le m}N[z_i]$. We want to prove
 that $|W'| \le q$. Note that because $Z$ covers~$W$, if $m=k$ we
 have $W'=\emptyset$ and we are done. We can therefore assume
 that $m<k$ and $W'\neq \emptyset$. Since $Z$ is a cover of $W$,
 we also know that $(z_{m+1},\ldots z_k)$ is a cover of $W'$,
 therefore there is an element in $(z_{m+1},\ldots z_k)$ that
 dominates at least a $1 / k$ fraction of $W'$. Thanks to the
 previously defined order, we know that~$z_{m+1}$ is such element.
 Since $\alpha = 1/k$, it follows that~$z_{m+1}$ is $\alpha$-strong
 for $W'$.
 This, together with the definition of $m$, we have that $|N[z_i] \cap (W \setminus \bigcup_{j\le i}N[z_j])| < \ell$ meaning that $|N[z_{m+1}] \cap W'| < \ell$. This implies that $|W'|/k < \ell$. And since $\ell = q/k$, we have $|W'|<q$.
  Hence, $(z_1,\ldots, z_m)$ is a pseudo-cover of $W$.
\end{proof}


While there can exist unboundedly many covers for a set $W\subseteq V(G)$,
the nice observation of Czygrinow et al.\ was that the number of
pseudo-covers is bounded whenever the input graph excludes some
biclique $K_{s,t}$ as a subgraph. We do not state the result in this
generality, as it leads to enormous constants. Instead, we focus on the
case where small covers exist, that is, on the case where~$\nabla_1(G)$
is bounded and optimize the constants for this case.

\begin{lemma}\label{lem:num-high-degree}
Let $W\subseteq V(G)$ of size at least $8\nabla_1 / \alpha^2$.
Then there are at most $4\nabla_1/\alpha$ vertices that are
$\alpha$-strong for $W$.
\end{lemma}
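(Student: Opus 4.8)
The plan is to prove this the same way as \cref{lem:neighborhood-dom1}, namely by a counting argument on a $1$-shallow minor. Suppose toward a contradiction that there are strictly more than $4\nabla_1/\alpha$ vertices that are $\alpha$-strong for $W$; call them $z_1, z_2, \ldots$ Each $z_i$ satisfies $|N[z_i]\cap W|\geq \alpha|W|$, so each one dominates a large chunk of $W$. The idea is to contract, for each strong vertex $z_i$, a connected subgraph of radius at most $1$ that sits inside $N[z_i]$ and reaches into $W$, obtaining a bipartite-like $1$-shallow minor whose edge count is forced to exceed $\nabla_1$ times its vertex count.

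The key steps, in order, would be as follows. First, I would set up the two sides of the minor: on one side the strong vertices $z_i$ (as singleton branch sets, or as stars centered at $z_i$), and on the other side a set of ``representative'' vertices drawn from $W$ that the $z_i$ collectively hit many times. The natural choice mirrors the proof of \cref{lem:neighborhood-dom1}: treat each $z_i$ as contracting its closed neighborhood (intersected with $W$) to a single vertex, so that an edge of the minor records the incidence ``$z_i$ dominates $w$'' for $w\in W$. Second, I would count incidences: since each of the more than $4\nabla_1/\alpha$ strong vertices covers at least $\alpha|W|$ elements of $W$, the total number of (strong vertex, covered element) incidences is more than $4\nabla_1/\alpha \cdot \alpha|W| = 4\nabla_1|W|$. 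Third, I would translate this incidence count into an edge count of a $1$-shallow minor on roughly $|W|$ plus the number of strong vertices as vertex set, and derive $|E(H)| > \nabla_1 |V(H)|$, contradicting the definition of $\nabla_1$.

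The delicate part, which I expect to be the main obstacle, is ensuring that the branch sets are genuinely \emph{pairwise vertex disjoint} and of \emph{radius at most $1$}, so that what we build is a legitimate $1$-shallow minor rather than just an abstract bipartite incidence structure. A single vertex $w\in W$ may be dominated by many strong vertices at once, so one cannot naively put all of $N[z_i]$ into the branch set of $z_i$ for every $i$ --- the sets would overlap. The resolution should be to keep the strong vertices as singleton branch sets $\{z_i\}$ (radius $0$) and to use elements of $W$ as the other branch sets $\{w\}$ (again radius $0$), connecting $z_i$ to $w$ by the edge of $G$ witnessing $w\in N[z_i]$; this keeps everything disjoint and of radius $0\le 1$ at the cost of needing the $z_i$ themselves to be outside $W$ or handled carefully when $z_i\in W$. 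This is exactly the bookkeeping subtlety handled in \cref{lem:neighborhood-dom1} by discarding a linear-in-$\gamma$ number of vertices, and the hypothesis $|W|\geq 8\nabla_1/\alpha^2$ is what guarantees the surviving incidence count still strictly beats the density threshold after any such minor adjustment. Once the disjointness and radius bookkeeping is pinned down, the final inequality $|E(H)| > \nabla_1|V(H)|$ falls out of the arithmetic and contradicts $H\minor_1 G$.
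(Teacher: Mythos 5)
Your proposal is correct and rests on the same density-counting contradiction as the paper's proof, but the minor you finally construct is not the one the paper uses, and the difference is worth noting. The paper does \emph{not} keep the $\alpha$-strong vertices as separate branch sets: it pairs each strong vertex $a\notin W$ with a private neighbour $w\in W$ (available because $\alpha|W|\ge 8\nabla_1/\alpha > 4\nabla_1/\alpha$) into a radius-$1$ branch set $\{w,a\}$, so that $H$ has exactly $|W|$ vertices and, with $c=4\nabla_1/\alpha$, the bound $|E_H|/|V_H|\ge (c\alpha|W|-c^2)/|W|\ge 4\nabla_1-2\nabla_1>\nabla_1$ follows immediately. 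Your resolution instead takes all branch sets to be singletons, i.e.\ the induced subgraph on $W\cup Z$ where $Z$ is the set of strong vertices; this is a $0$-shallow minor, so your argument in fact only needs $\nabla_0(G)\le\nabla_1(G)$, which is a small gain in generality. The price is a larger denominator, $|V(H)|\le |W|+|Z|$, and a correction you currently only gesture at: an incidence $(z,z)$ with $z\in W$ is not an edge, and an edge between two strong vertices both lying in $W$ is counted twice, so up to $|Z|^2$ incidences must be discarded. The arithmetic still closes: $|E(H)|\ge c\alpha|W|-c^2=4\nabla_1|W|-c^2$ and $|V(H)|\le |W|+c$, and the required inequality $4\nabla_1|W|-c^2>\nabla_1(|W|+c)$ reduces to $3|W|>16\nabla_1/\alpha^2+4\nabla_1/\alpha$, which holds since $3|W|\ge 24\nabla_1/\alpha^2$. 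So both routes work with the stated constants; the paper's merging trick buys the clean denominator $|V_H|=|W|$ at the cost of genuinely using radius-$1$ branch sets, while yours is more elementary but needs the double-counting correction spelled out explicitly in a final write-up.
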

\begin{proof}
  Assume that there is such a set $W$ with at least $c\coloneq 4\nabla_1 /\alpha$ vertices that are $\alpha$-strong for $W$.
  We build a $1$-shallow minor $H$ of the graph $G$ with $|W|$ branch sets.
  Each branch set is either a single element of $W$, or a pair $\{w,a\}$, where $w$ is in $W$ and $a$ is an $\alpha$-strong vertex for $W$, connected to $w$, and that is not in $W$. This is obtained by iteratively contracting one edge of an $\alpha$-strong vertex with a vertex of $W$. This is possible because $\alpha |W|>c$, so during the process and for any \mbox{$\alpha$-strong} vertex we can find a connected vertex in $W$ that is not part of any contraction.

  Once this is done, we have that $|V_H|=|W|$. For the edges, each of the \mbox{$\alpha$}-strong vertices can account for $\alpha |W|$ many edges. We need to subtract~$c^2$ from the total as we do not count twice an edge between two strong vertices. Therefore $|E_H| \ge c\alpha |W| - c^2$. Note also that because $|W| \ge 8\nabla_1 / \alpha^2$, we have that $ 2\nabla_1\ge (4\nabla_1)^2 / (\alpha^2|W|)$. All of this together leads to:

  $$\frac{|E_H|}{|V_H|} \ge \frac{c\alpha |W| - c^2}{|W|} \ge 4 \nabla_1 - \frac{(4\nabla_1)^2}{\alpha^2|W|} \ge 4\nabla_1 - 2\nabla_1 > \nabla_1 $$

  This contradicts the definition of $\nabla_1$.
\end{proof}

This leads quickly to a bound on the number of pseudo-covers.

\begin{lemma}\label{lem:num-pseudo-covers}
For every $W\subseteq V(G)$ of size at least $\ell$, the number of pseudo-covers is bounded by $2(4\nabla_1(G)/\alpha)^k$.
\end{lemma}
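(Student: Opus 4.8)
The plan is to count pseudo-covers by their length $m$, viewing a pseudo-cover as an ordered sequence $(v_1,\dots,v_m)$ and bounding, at each step, the number of admissible choices for the next vertex. Set $c \coloneqq 4\nabla_1/\alpha$, the quantity appearing in \cref{lem:num-high-degree}. If I can show that for every fixed prefix $(v_1,\dots,v_{i-1})$ there are at most $c$ possible continuations $v_i$, then the number of pseudo-covers of length exactly $m$ is at most $c^m$, and summing over all $0\le m\le k$ gives the bound.

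First I would fix a prefix $(v_1,\dots,v_{i-1})$ and consider the residual set $W_i \coloneqq W\setminus\bigcup_{j<i}N[v_j]$ on which $v_i$ must act. By the second condition in the definition of a pseudo-cover, any admissible $v_i$ must be $\alpha$-strong for $W_i$. To apply \cref{lem:num-high-degree} to $W_i$ I need $|W_i|\ge 8\nabla_1/\alpha^2$, and this is exactly what the third condition provides: it forces $|N[v_i]\cap W_i|\ge \ell$, and since $N[v_i]\cap W_i\subseteq W_i$ this gives $|W_i|\ge \ell = 8\nabla_1/\alpha^2+1$. (If $|W_i|<\ell$, then no admissible $v_i$ exists, so such prefixes simply have no continuation and contribute nothing.) Hence \cref{lem:num-high-degree} applies and shows that at most $c$ vertices are $\alpha$-strong for $W_i$, so there are at most $c$ choices for $v_i$, as required.

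It then follows that the total number of pseudo-covers is at most $\sum_{m=0}^{k} c^m = (c^{k+1}-1)/(c-1)$. Since $c = 4\nabla_1/\alpha = 8\nabla_1^2 \ge 2$ (as $\nabla_1\ge\nabla_0\ge 1/2$ whenever $G$ has an edge), we have $c/(c-1)\le 2$, whence $\sum_{m=0}^{k} c^m \le c^{k+1}/(c-1) \le 2c^k = 2(4\nabla_1/\alpha)^k$, which is the claimed bound.

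The only genuinely delicate point, and the place where the definitions are really doing their work, is the applicability of \cref{lem:num-high-degree} at every step. The lemma needs the residual set to sit above the threshold $8\nabla_1/\alpha^2$, and this is guaranteed precisely by pairing the third bullet of the pseudo-cover definition with the choice $\ell = 8\nabla_1/\alpha^2+1$. Everything else — the per-step factor of $c$ and the geometric summation — is routine once this is in place.
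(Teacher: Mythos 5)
Your proof is correct, and it is essentially the argument the paper intends: the paper omits the proof and defers to Lemma~7 of Czygrinow et al., which likewise counts pseudo-covers as sequences, bounds the number of admissible continuations at each step by the bound on $\alpha$-strong vertices from \cref{lem:num-high-degree} (using the third bullet of the definition together with $\ell=8\nabla_1/\alpha^2+1$ to meet that lemma's size hypothesis), and sums the resulting geometric series. Your closing observation that $c=4\nabla_1/\alpha\ge 2$ justifies absorbing the sum into the factor $2$ and completes the bound $2(4\nabla_1/\alpha)^k$.
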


The proof of the lemma is exactly as the proof of Lemma~7 in the
presentation of Czygrinow et al.~\cite{czygrinow2018distributed},
we therefore refrain from repeating it here.

%


\begin{tcolorbox}
We write $\Tt(v)$ for the set of all pseudo-covers
of $N(v)$ and $\Pp(v)$ for the set of all vertices that appear in a
pseudo-cover of $N(v)$.
\end{tcolorbox}

\begin{corollary}\label{cor:nb-dominators}
For every $v\in V(G)$ with $|N(v)|> \ell$, we have
  $|\Tt(v)|\le 2(2k^2)^k$ and $|\Pp(v)|\le 2k(2k^2)^k\le (2k)^{2k+1}$.
\end{corollary}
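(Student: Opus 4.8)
The plan is to bound $|\Tt(v)|$ and $|\Pp(v)|$ by directly combining \cref{lem:num-pseudo-covers} with the parameter substitutions fixed earlier in the paper. Recall that we set $\alpha = 1/k$ and $k = 2\nabla_1$, so the quantity $4\nabla_1/\alpha$ appearing in the bound of \cref{lem:num-pseudo-covers} rewrites as $4\nabla_1 \cdot k = 4\nabla_1 \cdot 2\nabla_1 = 8\nabla_1^2 = 2k^2$. This is the one substitution that drives everything, so first I would make this rewriting explicit.

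For the bound on $|\Tt(v)|$, I would apply \cref{lem:num-pseudo-covers} with $W = N(v)$, which is legitimate precisely because the hypothesis $|N(v)| > \ell$ guarantees $|N(v)| \ge \ell$, the size requirement of that lemma. The lemma gives that the number of pseudo-covers of $N(v)$ is at most $2(4\nabla_1/\alpha)^k$, and substituting $4\nabla_1/\alpha = 2k^2$ yields $|\Tt(v)| \le 2(2k^2)^k$ immediately, which is the first claimed inequality.

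For $|\Pp(v)|$, the key observation is that each pseudo-cover is a sequence of at most $m \le k$ vertices, so the total number of vertices appearing across all pseudo-covers is at most $k$ times the number of pseudo-covers. Hence $|\Pp(v)| \le k \cdot |\Tt(v)| \le k \cdot 2(2k^2)^k = 2k(2k^2)^k$, giving the first half of the second claim. The final step is the routine simplification $2k(2k^2)^k \le (2k)^{2k+1}$: writing $(2k^2)^k = 2^k k^{2k}$ and $2k \cdot 2^k k^{2k} = 2^{k+1} k^{2k+1}$, one compares this against $(2k)^{2k+1} = 2^{2k+1} k^{2k+1}$, and since $2^{k+1} \le 2^{2k+1}$ for all $k \ge 0$ the inequality holds.

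I do not anticipate a genuine obstacle here, as the corollary is a direct specialization of \cref{lem:num-pseudo-covers} together with the per-pseudo-cover length bound $m \le k$; the only place demanding a moment of care is verifying the arithmetic identity $4\nabla_1/\alpha = 2k^2$ under the fixed choices $\alpha = 1/k$ and $k = 2\nabla_1$, and the final estimate $2k(2k^2)^k \le (2k)^{2k+1}$, both of which are elementary.
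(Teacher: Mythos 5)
Your proposal is correct and is exactly the intended derivation: the paper states \cref{cor:nb-dominators} without proof as an immediate specialization of \cref{lem:num-pseudo-covers} under the substitutions $\alpha=1/k$ and $k=2\nabla_1$ (so $4\nabla_1/\alpha=2k^2$), combined with the fact that each pseudo-cover contains at most $k$ vertices. Your arithmetic, including the final estimate $2k(2k^2)^k=2^{k+1}k^{2k+1}\le 2^{2k+1}k^{2k+1}=(2k)^{2k+1}$, checks out.
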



\section{Finding dominators}

Recall that by \cref{lem:neighborhood-dom2} for every $v\in V(G)$ we can
cover $N(v)$ with at most~$k$ vertices from $D'$.
To first gain an intuitive understanding of the second phase of the
algorithm, where we construct a set $D_2\subseteq V(G)$, let us consider the
following iterative procedure.

Fix some $v\in V(G)$. Let $v_1=v$ and $B_1\coloneqq N(v)$
and assume $|B_1|\geq k^{t-1}(2t{-}1)$. We consider $s$
vertices $v_1,v_2\ldots, v_s$ as follows.
Choose as~$v_2$ an arbitrary vertex different from $v_1$
that dominates at least $k^{t-2}(2t-1)$ vertices of~$B_1$, that is,
a vertex that satisfies $|N[v_2]\cap B_1|\geq k^{t-2}(2t-1)$.
Note that any vertex~$v_2$ that dominate a $1/k$-fraction of $B_1$ can
be such vertex, i.e.~it is enough for $v_2$ to be $\alpha$-strong for $B_1$.

Let $B_2\coloneqq N(v_2)\cap B_1$. Observe that we consider
the open neighborhood of $v_2$ here, hence $B_2$ does
not contain $v_2$. Hence, $|B_2|\geq k^{t-2}(2t-1)-1\geq
k^{t-2}(2t-2)$.
We continue to choose vertices $v_3,\ldots$ inductively
just as above. That is, if the vertices $v_1,\ldots,v_i$ and sets
$B_1,\ldots, B_i\subseteq V(G)$ have been defined, we choose
the next vertex $v_{i+1}$ as an arbitrary vertex not in $\{v_1,\ldots,
v_i\}$ that dominates
at least $k^{t-i-1}(2t-i)$ vertices of $B_i$, that is, a vertex with
$|N[v_{i+1}]\cap B_i| \ge k^{t-i-1}(2t-i)$ and let
$B_{i+1}\coloneqq N(v_{i+1})\cap B_i$, of size at least
$k^{t-i-1}(2t-i-1)$.

\begin{lemma}\label{lem:sequence}
Assume $|N(v)|\geq k^{t-1}\cdot (2t-1)$. Let $v_1,\ldots, v_s$
be a maximal sequence obtained as above. Then $s<t$ and
$D'\cap \{v_1,\ldots, v_s\}\neq \emptyset$.
\end{lemma}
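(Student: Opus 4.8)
The plan is to treat the two assertions separately and to build on one clean structural observation about the construction. By induction on the construction, the sets $B_i$ are nested intersections of open neighbourhoods: since $B_1=N(v_1)$ and $B_{i+1}=N(v_{i+1})\cap B_i$, we get $B_i=\bigcap_{j\le i}N(v_j)$. Hence every vertex of $B_i$ is a common neighbour of the pairwise distinct vertices $v_1,\dots,v_i$, and no $v_j$ can itself lie in $B_i$ (that would force $v_j\in N(v_j)$). Propagating the stated size bounds along the recurrence yields $|B_i|\ge k^{t-i}(2t-i)$; in particular, if the sequence ever reached length $t$ then $B_t$ would satisfy $|B_t|\ge k^{0}\cdot t=t$.

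For the first claim I would argue by contradiction. Suppose $s\ge t$. Then $v_1,\dots,v_t$ are defined and $B_t$ contains at least $t$ vertices, each adjacent to all of $v_1,\dots,v_t$. Picking any $t$ of these vertices together with $v_1,\dots,v_t$ gives two disjoint sides (disjoint by the observation above, distinct $v_j$'s by construction) that induce a $K_{t,t}$ subgraph of $G$. This contradicts the choice of $t$ as a value for which $G$ excludes $K_{t,t}$, so $s<t$.

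For the second claim I would combine maximality with the covering property of $D'$. Recall from \cref{lem:neighborhood-dom2} and the remark after it that $N(v)$ can be covered by a set $Z\subseteq D'$ with $|Z|\le k$. Since $B_s\subseteq N(v_1)=N(v)$, the same $Z$ covers $B_s$, i.e.\ $B_s\subseteq\bigcup_{z\in Z}N[z]$, so by pigeonhole some $z\in Z$ satisfies $|N[z]\cap B_s|\ge |B_s|/k\ge k^{t-s-1}(2t-s)$. But this is precisely the threshold required for a legal choice of $v_{s+1}$. By maximality of the sequence no vertex outside $\{v_1,\dots,v_s\}$ can meet this threshold, so $z\in\{v_1,\dots,v_s\}$; since $z\in D'$, we conclude $D'\cap\{v_1,\dots,v_s\}\neq\emptyset$.

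The step I expect to be the main obstacle is the exact bookkeeping of the size thresholds: the whole argument hinges on the fraction $|B_s|/k$ landing exactly on the extension threshold $k^{t-s-1}(2t-s)$, since that is what lets maximality force $z$ to be one of the already-chosen vertices. The only other point demanding care is justifying the disjointness of the two sides of the biclique and the distinctness of the $v_i$, both of which the construction guarantees but which must be verified explicitly to legitimately invoke $K_{t,t}$-freeness.
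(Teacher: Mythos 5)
Your proposal is correct and follows essentially the same argument as the paper: the nested-intersection structure of the $B_i$ plus the size recurrence to extract a $K_{t,t}$ for the first claim, and the cover of $N(v_1)\supseteq B_s$ by at most $k$ vertices of $D'$ plus pigeonhole and maximality for the second. Your explicit verification that the two sides of the biclique are disjoint (no $v_j$ lies in any $B_i$ because neighborhoods are open) is a detail the paper leaves implicit, but the substance is identical.
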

\begin{proof}
  Assume that we can compute a sequence $v_1,v_2,\ldots,v_t$. By definition,
  every $v_i$ is connected to every vertices of $B_t$.
  For every $1\leq i\leq t$ we have $|B_i|\geq k^{t-i}(2t-i)$
  and therefore
  $|B_t|\ge t$. This shows that the two sets
  $\{v_1,\ldots,v_t\}$ and $B_t$ form a $K_{t,t}$ as a subgraph of $G$.
  Since this is not possible, the process must stop having performed at
  most $t-1$ rounds.

  We now turn to the second claim of the lemma. Assume that
  $v_1,v_2,\ldots,v_s$ is a maximal sequence for some $s < t$. We assume
  $v_1\not\in \hat{D}$, otherwise, we are done, as $\hat{D}\subseteq D'$.
  Because $s <t$,
  we have that $B_s$ is not empty. Because $B_s \subseteq B_1 = N(v_1)$, we have
  that $B_s$ can be dominated with at most~$k$ elements of~$D$ (by definition of $\hat{D}$), and in particular by at most $k$ elements of $D'$. Therefore,
  there must be an element~$v$ of~$D'$ that dominates a $1/k$ fraction of $B_s$.
  If $v$ was not one of the $v_1,\ldots,v_s$, we could have continued the
  sequence by defining $v_{s+1} \coloneqq v$.
  Since the sequence is maximal, $v$ must be one of the $v_1,\ldots,v_s$,
  which leads to $D'\cap \{v_1,\ldots, v_s\}\neq\emptyset$.

%
%
%
\end{proof}

We aim to carry out this iterative process in parallel
for all vertices \mbox{$v\in V(G)$} with a sufficiently large neighborhood.
Of course, in the process
we cannot tell when we have encountered the element of~$D'$.
Hence, from the constructed vertices $v_1,\ldots, v_s$
we will simply choose the element~$v_s$ into the dominating set.
Unfortunately, this approach alone can give us arbitrarily large
dominating sets, as we can have many choices for the vertices
$v_i$, while already $v_1$ was possibly optimal.
We address this issue by restricting
the possible choices for the vertices~$v_i$.

\begin{definition}\label{def:dom-sequence}
  For any vertex $v\in V(G)$, a {\em $k$-dominating-sequence} of $v$ is a sequence
  $(v_1,\ldots,v_s)$ for which we can define sets $B_1,\ldots,B_s$ such that:
  \begin{itemize}
    \item $v_1=v$, $B_1 \subseteq N(v_1)$,
    \item for every $i\le s$ we have $B_{i} \subseteq N(v_{i})\cap B_{i-1}$,
    \item $|B_{i}|\geq k^{t-i}(2t-i+(t-i)q)$
    \item and for every $i\le s$ we have $v_i\in \Pp(v_{i-1})$.
  \end{itemize}
  A $k$-dominating-sequence $(v_1,\ldots,v_s)$ is {\em maximal} if there is no
  vertex $u$ such that $(v_1,\ldots,v_s,u)$ is a $k$-dominating-sequence.
\end{definition}

Note that this definition requires $|N(v)|\ge k^{t-1}(2t-1+(t-1)q)$. For a
vertex~$v$ with a not sufficiently large neighborhood, there are no
$k$-dominating-sequences of $v$.
We show two main properties of these dominating-sequences.
First, \cref{lem:max-dom-sequence} shows that a maximal dominating sequence must
encounter $D'$ at some point. Second, with \cref{lem:shape-sequences,lem:small-D-hat,lem:inclusion-D-hat},
we show that collecting all ``end points'' of all maximal dominating sequences
results in a set $D_2$ of size linear in the size of $D'$. While $D'$ cannot be computed, we
can compute $D_2$.
%

\begin{lemma}\label{lem:max-dom-sequence}
Let $v$ be a vertex 
and let
$(v_1,\ldots, v_s)$ be a maximal $k$-dominating-sequence of $v$. Then $s<t$ and
$D'\cap \{v_1,\ldots, v_s\}\neq \emptyset$.
\end{lemma}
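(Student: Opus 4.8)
The plan is to mirror the proof of \cref{lem:sequence}, upgrading each step so that it respects both the additional pseudo-cover constraint $v_i\in\Pp(v_{i-1})$ and the enlarged size thresholds $k^{t-i}(2t-i+(t-i)q)$ from \cref{def:dom-sequence}. The bound $s<t$ I would obtain exactly as before: since $B_i\subseteq N(v_i)\cap B_{i-1}$ the sets are nested, so $B_s\subseteq B_i\subseteq N(v_i)$ and every $v_i$ is adjacent to every vertex of the last block. If a sequence of length $t$ existed, substituting $i=t$ into the threshold gives $|B_t|\geq k^0(2t-t+0)=t$, so $\{v_1,\ldots,v_t\}$ together with a $t$-subset of $B_t$ would form a $K_{t,t}$, contradicting the choice of $t$. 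Here one uses that the $v_i$ are pairwise distinct and disjoint from $B_t$; disjointness is automatic because $B_t\subseteq N(v_i)$ excludes each $v_i$, and distinctness is maintained along the sequence as in the process preceding \cref{lem:sequence}.

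For the second claim I would assume toward a contradiction that no $v_i$ lies in $D'$ and then show the sequence can be extended, contradicting maximality. Since $s<t$ we have $t-s\geq 1$, hence $|B_s|>q$, and $B_s\subseteq N(v_s)$. By \cref{lem:neighborhood-dom2} the neighborhood $N(v_s)$ admits a cover $Z\subseteq D'$ of at most $k$ vertices (when $v_s\in\hat D$ the single vertex $v_s\in D'$ already covers $N(v_s)$). Padding $Z$ to size $k$ and applying \cref{lem:cover-to-pseudo-cover} with $W=N(v_s)$ yields a pseudo-cover $(z_1,\ldots,z_m)$ of $N(v_s)$ whose vertices all lie in $Z\subseteq D'$, so each $z_j\in\Pp(v_s)$. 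The pseudo-cover leaves at most $q$ vertices of $N(v_s)$, and therefore at most $q$ vertices of $B_s$, uncovered, so by pigeonhole some $z_j$ satisfies $|N[z_j]\cap B_s|\geq(|B_s|-q)/k$.

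Finally I would verify that setting $v_{s+1}:=z_j$ and $B_{s+1}:=N(z_j)\cap B_s$ meets the required threshold. After subtracting one for the possible loss between closed and open neighborhoods, the check reduces to the inequality $k^{t-s-1}(1+q)\geq q/k+1$, which holds since $t-s-1\geq 0$. As $z_j\in D'$ while no $v_i\in D'$, the vertex $z_j$ is genuinely new, so $(v_1,\ldots,v_s,z_j)$ is a valid $k$-dominating-sequence, contradicting maximality; hence $D'\cap\{v_1,\ldots,v_s\}\neq\emptyset$. I expect the main obstacle to be exactly this last piece of bookkeeping rather than any conceptual difficulty: the thresholds in \cref{def:dom-sequence} carry the extra $(t-i)q$ summand precisely so that the $\leq q$ leftover of each pseudo-cover can be absorbed at every step while still forcing $|B_t|\geq t$, and the delicate part is confirming that these constants line up at each extension.
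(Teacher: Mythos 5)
Your proposal is correct and follows essentially the same route as the paper's proof: the $s<t$ bound via the $K_{t,t}$ argument with the threshold collapsing to $|B_t|\geq t$, and the extension argument that converts a $k$-element cover of $N(v_s)$ by $D'$-vertices into a pseudo-cover via \cref{lem:cover-to-pseudo-cover}, applies pigeonhole to $B_s\setminus X$, and checks that the $(t-i)q$ slack in the thresholds absorbs the $\leq q$ leftover plus the closed-to-open neighborhood loss. Your reduction of the bookkeeping to $k^{t-s-1}(1+q)\geq q/k+1$ is exactly the inequality the paper verifies in its two displayed estimates.
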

\begin{proof}
  The statement $s<t$ is proved exactly as for \cref{lem:sequence}.\\
  To prove the second statement we assume, in order to reach
  a contradiction, that $D'\cap\{ v_1,\ldots, v_s\}=\emptyset$.
  We have that $B_s \subseteq N(v_s)$, and remember that $N(v_s)$ can be
  dominated by at most $k$ elements of $D'$. By \cref{lem:cover-to-pseudo-cover},
  we can derive a pseudo-cover $S=(u_1,\ldots,u_m)$ of
  $N(v_s)$, where $m\le k$ and every $u_i$ is an element of $D'$.
  Let $X$ denote the set (of size at most $q$) of vertices not covered by $S$.
  As $S$ contains at most $k$ vertices there must exist a
  vertex $u$ in~$S$ that covers at least a $1/k$ fraction of
  $B_s\setminus X$.
  By construction, we have that $|B_s| \ge k^{t-s}\cdot(2t-s+(t-s)q)\ge k(t+q)$
  because $s<t$. Therefore $|B_s\setminus X| \ge k$ and we have
  $$|N[u]\cap B_{s}|\geq \frac{|B_s|-q}{k}
  \geq\frac{k^{t-s}(2t-s+(t-s)q) -q}{k},$$
  hence
  $$|N[u]\cap B_{s}| \geq\frac{k^{t-s}(2t-s+(t-s-1)q)}{k} \geq k^{t-s-1}(2t-s+(t-s-1)q),$$
  and therefore
  $$ |N(u)\cap B_{s}| \geq |N[u]\cap B_{s}|-1 \geq k^{t-s-1}(2t-s-1+(t-s-1)q).$$
  So we can continue the sequence $(v_1,\ldots,v_s)$ by defining
  $v_{s+1}\coloneqq u$. In conclusion if $(v_1,\ldots,v_s)$ is a maximal
  sequence, it contains an element of $D'$.
\end{proof}

The goal of this modified procedure is first to ensure that every maximal
sequence contains an element of $D'$ and second, to make sure that there are not
to many possible $v_s$ (which are the elements that we pick for the dominating set).
%
This is illustrated in the following example and formalized right after that.

\begin{example}\label{ex:sequence}
  Consider the case of planar graphs. Since these graph exclude $K_{3,3}$,
  i.e.~$t=3$, we have that every maximal sequence is of length 1 or 2.
  For every $v$ of sufficiently large neighborhood we consider every
  maximal $k$-dominating-sequence $(v_1,v_s)$ of $v$.
  We then add $v_s$ to the set $D_2$. We want to show that $|D_2|$ is
  linearly bounded by $|D'|$ and hence by $\gamma(G)$.

  If $s=1$, then we have $v_s\in D'$ and we are good.

  If $s=2$, we have two possibilities. If $v_2$ is in $D'$, we are good.
  If however, $v_2$ is not in $D'$, then $v_1$ is in
  $D'$. Additionally, $v_2$ is in some pseudo-cover $S$ of $v_1$,
  i.e.~$v_2\in \Pp(v_1)$.

  By \cref{cor:nb-dominators}, we have $|\Pp(v_1)|\le (2k)^{2k+1}$ (and
  in fact this number is much smaller in the case of planar graphs).
  Therefore we have  $|D_2| \le ((2k)^{2k+1}+1)|D'|$.
\end{example}

We generalize the ideas of \cref{ex:sequence}, by explaining what
a ``few possible choices''  in the discussion before \cref{def:dom-sequence}
means.

\begin{lemma}\label{lem:shape-sequences}
  For any maximal $k$-dominating-sequence $(v_1,\ldots,v_s)$,
  and for any $i\le s-1$, we have that
  \begin{itemize}
    \item $v_{i+1}\in \Pp(v_i)$,
    \item $|N(v_i)|\ge \ell$, and
    \item $|\Pp(v_i)|\le (2k)^{2k+1}$.
  \end{itemize}
\end{lemma}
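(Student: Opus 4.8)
The plan is to verify the three bullets essentially in order: the first is immediate from the definition, and the remaining two both follow from a single size estimate on $N(v_i)$. So the whole lemma reduces to showing that a vertex sitting strictly inside a maximal $k$-dominating-sequence has a neighborhood comfortably above the pseudo-cover threshold $\ell$.

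First I would dispatch the first bullet, $v_{i+1}\in\Pp(v_i)$. This is nothing but the last condition in \cref{def:dom-sequence} with the index shifted by one: a $k$-dominating-sequence requires $v_j\in\Pp(v_{j-1})$ for every $j$, so taking $j=i+1$ gives the claim for all $i\le s-1$. Nothing needs to be proved here.

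The core of the argument is the second bullet, $|N(v_i)|\ge\ell$. Since the sequence is maximal, \cref{lem:max-dom-sequence} gives $s<t$, hence for $i\le s-1$ we have $i\le t-2$, i.e. $t-i\ge 2$. The set $B_i$ certifies a large neighborhood: by \cref{def:dom-sequence} we have $B_i\subseteq N(v_i)$ and $|B_i|\ge k^{t-i}(2t-i+(t-i)q)$. Dropping the nonnegative term $2t-i$ and using $t-i\ge 2$ together with $k\ge 1$, I would estimate
$$|N(v_i)|\ge|B_i|\ge k^{t-i}(t-i)q\ge 2k^2q=8k^6\ge 8k^3>4k^3+1=\ell,$$
recalling that $\ell=4k^3+1$ and $q=4k^4$. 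It is precisely this estimate that explains the restriction to $i\le s-1$ rather than $i\le s$: for $i=s$ one only gets $t-i\ge 1$, hence $|B_s|\ge kq=4k^5$, which is too weak to exceed $\ell$ when $k$ is small, so the final vertex $v_s$ is deliberately excluded.

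Finally, the third bullet is a direct application of \cref{cor:nb-dominators}: the estimate just obtained gives $|N(v_i)|>\ell$, so the corollary yields $|\Pp(v_i)|\le 2k(2k^2)^k\le (2k)^{2k+1}$. I do not expect a genuine obstacle anywhere; the only real work is the middle arithmetic step, and even that is routine once the concrete values of $\ell$ and $q$ are substituted. The one point worth stating carefully is the use of $s<t$ via \cref{lem:max-dom-sequence} to obtain $t-i\ge 2$, since this is what makes the neighborhood bound robust down to small $k$ and thereby licenses the appeal to \cref{cor:nb-dominators}.
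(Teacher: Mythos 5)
Your proof is correct and follows essentially the same route as the paper: the first bullet is read off from \cref{def:dom-sequence}, the second from the lower bound $|N(v_i)|\ge|B_i|\ge k^{t-i}(2t-i+(t-i)q)$, and the third from \cref{cor:nb-dominators}. The paper's own argument is just a terser version of your middle step (it bounds $|B_i|\ge q>\ell$ directly), so the only difference is that you carry out the arithmetic more explicitly; your side remark that the case $i=s$ fails for small $k$ is a tangent, since the real reason for restricting to $i\le s-1$ is that the first bullet needs $v_{i+1}$ to exist.
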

\begin{proof}
  By construction $v_{i+1}\in \Pp(v_i)$, furthermore, $v_i$ dominates at least
  $B_i$, and
  $|B_{i}|\geq k^{t-i}(2t-i+(t-i)q) \ge q >\ell$.
  We conclude with~\cref{cor:nb-dominators}.
\end{proof}

We now for every $v\in V(G)$ compute all maximal $k$-dominating-sequences
starting with $v$.
Obviously, as every $v_i$ in any $k$-dominating-sequences of $v$ dominates some
neighbors of $G$, we can locally compute these steps after having
learned the $2$-neighborhood $N_2[v]$ of every vertex in two rounds
in the LOCAL model of computation.

For a set $W\subseteq V(G)$ we write $\Pp(W) = \bigcup_{v\in W}\Pp(v)$.
Remember that the definition of $\Pp(v)$ requires that $|N(v)|>\ell$. We simply
extend the notation with $\Pp(v)=\emptyset$ if $|N(v)|\le \ell$.
We now define
\[\Pp^{(1)}(W)\coloneqq \Pp(W)\]
additionally, for $1<i <t$
\[\Pp^{(i)}(W)\coloneqq \Pp(\Pp^{(i-1)}(W))\]
and finally, for every $1\le i \le t$
\[\Pp^{(\leq i)}(W)\coloneqq \bigcup_{1\leq j\leq i}\Pp^{(j)}(W).\]


Using \cref{lem:shape-sequences}, for every
$k$-dominating-sequence $(v_1,\ldots,v_s)$ we have that
$v_s \in \Pp^{(\le k)}(v_1)$.
More generally, for every $i\le s$, we have that $v_s\in \Pp^{(\le t)}(v_i)$.

\begin{tcolorbox}
We define $D_2$ as the set of all $u\in V(G)$ such that there is some vertex
$v\in V(G)$, and some maximal $k$-dominating-sequence $(v_1,\ldots,v_s)$ of $v$
with $u=v_s$.
\end{tcolorbox}

This leads to the following lemma.
\begin{lemma}\label{lem:inclusion-D-hat}
  $D_2 \subseteq \Pp^{(\le t)}(D')$.
\end{lemma}
\begin{proof}
  This uses the observation made above the statement of this lemma, together
  with \cref{lem:max-dom-sequence}.
\end{proof}
Note that while we don't know how to compute $D'$, this section explained how
to compute $D_2$ in 2 rounds with the LOCAL model of computation.

\begin{lemma}\label{lem:small-D-hat}
  $|D_2| \le (2k)^{t(2k+1)} \cdot|D'|$
\end{lemma}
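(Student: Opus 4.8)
The plan is to combine the structural inclusion $D_2 \subseteq \Pp^{(\le t)}(D')$ from \cref{lem:inclusion-D-hat} with the per-vertex branching bound $|\Pp(v)| \le (2k)^{2k+1}$ from \cref{cor:nb-dominators}, upgrading a bound on how far a single vertex can spread into a bound on the total size of the iterated image of $D'$. First I would record the subadditivity of the operator $\Pp$: since $\Pp(W) = \bigcup_{v\in W}\Pp(v)$ and each $|\Pp(v)| \le (2k)^{2k+1}$ (using $\Pp(v)=\emptyset$ when $|N(v)|\le\ell$, which only helps), the union bound gives $|\Pp(W)| \le (2k)^{2k+1}\,|W|$ for every $W \subseteq V(G)$. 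Writing $P \coloneqq (2k)^{2k+1}$ for brevity, an immediate induction on $j$ using $\Pp^{(j)}(W) = \Pp(\Pp^{(j-1)}(W))$ then yields $|\Pp^{(j)}(W)| \le P^{\,j}\,|W|$.

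Next I would assemble these layer bounds. Every element of $D_2$ lies in $\Pp^{(j)}(D')$ for some $j$ in the relevant range, so $|D_2| \le \sum_j |\Pp^{(j)}(D')| \le |D'| \sum_j P^{\,j}$. The geometric sum is the one place where care is needed: summing $P^0 + P^1 + \dots$ up to the full $t$-th layer produces a factor strictly larger than the target $P^{\,t}=(2k)^{t(2k+1)}$. The resolution comes from \cref{lem:max-dom-sequence}, which guarantees that every maximal $k$-dominating-sequence has length $s<t$ and meets $D'$; hence the index $j$ measuring how far $v_s$ sits above its $D'$-ancestor ranges only up to $t-2$, with the base case $v_s\in D'$ contributing the $P^0$ term. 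This gives $|D_2| \le |D'|\sum_{j=0}^{t-2}P^{\,j} = |D'|\,\frac{P^{\,t-1}-1}{P-1}$.

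Finally I would discharge the arithmetic using $P = (2k)^{2k+1} \ge 2$: since $P-1 \ge 1$ we get $\frac{P^{\,t-1}-1}{P-1} \le P^{\,t-1} \le P^{\,t} = (2k)^{t(2k+1)}$, establishing $|D_2| \le (2k)^{t(2k+1)}\,|D'|$ with room to spare. I expect the genuine obstacle not to be the counting itself but keeping the exponent bookkeeping honest: one must invoke the refined consequence of \cref{lem:max-dom-sequence} (that the sequences terminate before length $t$) rather than the face-value inclusion into $\Pp^{(\le t)}(D')$, since a crude summation all the way to the $t$-th iterate would overshoot the stated constant. Once the range of layers is pinned down correctly, the fact that $P$ is large makes the geometric series collapse into its dominant term and the bound follows cleanly.
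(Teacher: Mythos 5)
Your proof is correct and follows essentially the same route as the paper: combine the inclusion of $D_2$ in the iterated pseudo-cover image of $D'$ (\cref{lem:inclusion-D-hat}) with the per-vertex bound $|\Pp(v)|\le (2k)^{2k+1}$ from \cref{cor:nb-dominators}, and sum over layers. If anything, your bookkeeping is more careful than the paper's, which asserts $|\Pp^{(\le i)}(W)|\le c^i|W|$ by a ``naive induction'' even though the union over layers really gives $\sum_{j\le i}c^j|W|$, which exceeds $c^i|W|$; your observation that the relevant layer index only reaches $t-2$ (because maximal sequences have length $s<t$ and meet $D'$, by \cref{lem:max-dom-sequence}) is precisely what keeps the geometric sum below the stated constant $(2k)^{t(2k+1)}$.
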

\begin{proof}
  \cref{cor:nb-dominators} gives us that $|\Pp(v)|\le 2k(2k^2)^k$ for every
  $v\in V(G)$ with $|N(v)|> \ell$.
  As $\Pp(W) \le \sum\limits_{v\in W} |\Pp(v)|$,
  we have $P(W)\le |W|\cdot (2k)^{2k+1}$.
  A naive induction yields that for every $i\le t$,
  \[ |\Pp^{(\le i)}(W)|\leq c^i|W|, \] where $c=(2k)^{2k+1}$.
  Hence with this and \cref{lem:inclusion-D-hat} we have
  \[|D_2| \le (2k)^{t(2k+1)} \cdot|D'|\]
\end{proof}


\section{Cleaning up}

We now show that after defining and computing $D_2$ as explained in the
previous section, every neighborhood is almost entirely dominated by $D_2$.
More precisely, for every vertex $v$ of the graph
$|\{v'\in N(v) ~:~ v' \not\in N(D_2)\}| <  k^{t-1}(2t-1+(t-1)q)$ holds.

Before explaining why this holds, note that it implies that,
in particular, the vertices of $D$ have at most
$ k^{t-1}(2t-1+(t-1)q)$ non-dominated neighbors. Since every vertex is
either in $D$ or a
neighbor of some element in $D$, this implies that in the whole
graph there are at most $ k^{t-1}(2t-1+(t-1)q)\cdot \gamma$ non-dominated vertices left.

\begin{tcolorbox}
We can therefore define $D_3 \coloneqq \{v\in V(G) ~:~ v\not\in N(D_2) \}$
and have that $|D_3|\le  k^{t-1}(2t-1+(t-1)q)\cdot \gamma$, and that $D_1\cup D_2\cup D_3$ is a dominating
set of~$G$. 
\end{tcolorbox}

We now turn to the proof of the above claim.

\begin{lemma}\label{lem:smalldegree}
  For every vertex $v$ of the graph, the following holds:
  \[|\{v'\in N(v) ~:~ v' \not\in N(D_2)\}| < k^{t-1}(2t-1+(t-1)q).\]

\end{lemma}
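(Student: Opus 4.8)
The plan is to argue by contradiction. Suppose there is a vertex $v$ whose set of non-dominated neighbors
\[
  B_1 \coloneqq \{v' \in N(v) : v' \notin N(D_2)\}
\]
has size at least $k^{t-1}(2t-1+(t-1)q)$. I would show that this forces the existence of a maximal $k$-dominating-sequence whose terminal vertex simultaneously lies in $D_2$ and dominates a nonempty subset of $B_1$, directly contradicting the definition of $B_1$.

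First I would check that the single vertex $(v)$, together with the chosen set $B_1$, already satisfies every requirement of a $k$-dominating-sequence of length $1$ in the sense of \cref{def:dom-sequence}: we have $v_1=v$, $B_1\subseteq N(v_1)$ by definition, and the size bound $|B_1|\ge k^{t-1}(2t-1+(t-1)q)$ is exactly the $i=1$ instance of the third condition, while the neighbourhood-nesting condition and the membership $v_i\in\Pp(v_{i-1})$ are vacuous for a one-element sequence. Hence a valid $k$-dominating-sequence of $v$ exists, and I can extend it greedily, at each step picking an admissible $v_{i+1}\in\Pp(v_i)$ and setting $B_{i+1}=N(v_{i+1})\cap B_i$, until no further extension is possible. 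Call the result $(v_1,\ldots,v_s)$; by construction it is a maximal $k$-dominating-sequence of $v$.

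The crucial step is then to read off two facts about $v_s$. By the definition of $D_2$, the terminal vertex of every maximal $k$-dominating-sequence is placed into $D_2$, so $v_s\in D_2$. On the other hand, the nesting $B_s\subseteq B_{s-1}\subseteq\cdots\subseteq B_1$ together with $B_s\subseteq N(v_s)$ shows that every vertex of $B_s$ is a neighbour of $v_s$ and therefore lies in $N(D_2)$. By \cref{lem:max-dom-sequence} we have $s<t$, so $t-s\ge 1$ and hence $|B_s|\ge k^{t-s}(2t-s+(t-s)q)>0$; in particular $B_s$ is nonempty. Choosing any $w\in B_s$ gives $w\in B_1$, so $w\notin N(D_2)$, while at the same time $w\in N(v_s)\subseteq N(D_2)$, which is the desired contradiction.

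I expect the only genuinely delicate point to be the bookkeeping that lets the sequence start from the restricted set $B_1$ of non-dominated neighbours rather than from all of $N(v)$; once that is in place, the size guarantees carried along a $k$-dominating-sequence make $B_s$ automatically nonempty, and the contradiction is immediate. Note that the inequality in the statement is strict precisely because the hypothesis we negate is $|B_1|\ge k^{t-1}(2t-1+(t-1)q)$.
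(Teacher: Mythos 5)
Your proposal is correct and follows essentially the same route as the paper: negate the statement, take $B_1$ to be the set of non-dominated neighbours of $v$, grow a maximal $k$-dominating-sequence from $(v,B_1)$, and derive a contradiction from $v_s\in D_2$ together with the fact that $v_s$ dominates the nonempty set $B_s\subseteq B_1$. Your write-up is in fact slightly more careful than the paper's (which instead re-runs the extension argument of \cref{lem:max-dom-sequence} to exhibit $v_2$ explicitly), since you verify that the length-one sequence is valid and that $B_s\neq\emptyset$, so the contradiction goes through even when $s=1$.
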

\begin{proof}
  Assume, for the sake of reaching a contradiction, that there is a vertex $v$
  such that $|\{v'\in N(v) ~:~ v' \not\in N(D_2)\}| \ge  k^{t-1}(2t-1+(t-1)q)$.

  We then define $B_1\coloneqq \{v'\in N(v) ~:~ v' \not\in N(D_2)\}$.

  Exactly as in the proof of \cref{lem:max-dom-sequence}, we have that $B_1$
  can be dominated by at most $k$ elements of $D'$. Hence by
  \cref{lem:cover-to-pseudo-cover}, we can derive a
  pseudo-cover $S=(u_1,\ldots,u_m)$ of
  $B_1$, where $m\le k$ and every $u_i$ is an element of $D'$. This
  leads to the existence of some vertex $u$ in $S$ that covers at least a
  $1/k$ fraction of $B_1\setminus X$. This yields a vertex $v_2$, and a set $B_2$.

  We can then continue and build a maximal $k$-dominating-sequence
  $(v_1,\ldots v_s)$ of $v$. By construction, this sequence has the property
  that every $v_i$ dominates some elements of $B_1$. This is true in particular
  for $v_s$, but also we have that $v_s\in D_2$, hence a contradiction.

\end{proof}

\begin{corollary}\label{crl:d3}
The graph contains at most $k^{t-1}(2t-1+(t-1)q)\cdot \gamma$ non-dominated
vertices. In particular, the set $D_3$ has at most this size.
\end{corollary}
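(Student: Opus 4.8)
The plan is to derive this corollary directly from the preceding lemma, which bounds the number of non-dominated neighbors of any single vertex, by summing that bound over a fixed minimum dominating set. Recall that by definition $D_3=\{v\in V(G) : v\notin N(D_2)\}$ is precisely the set of non-dominated vertices, so it suffices to bound $|D_3|$.

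First I would fix the minimum dominating set $D$ of size $\gamma$ that we have been carrying throughout. Since $D$ dominates $G$, we have $N[D]=V(G)$, so every non-dominated vertex $w$ lies in $N[d]$ for at least one $d\in D$. Consequently the set of non-dominated vertices is contained in $\bigcup_{d\in D}\bigl(N[d]\setminus N(D_2)\bigr)$, and I would bound $|D_3|$ by summing $|N[d]\setminus N(D_2)|$ over all $d\in D$.

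The only point that needs care is passing from the open neighborhood $N(d)$, to which the preceding lemma applies, to the closed neighborhood $N[d]=\{d\}\cup N(d)$ that appears in the covering above. The preceding lemma gives $|N(d)\setminus N(D_2)|< k^{t-1}(2t-1+(t-1)q)$, and since all quantities are integers this means $|N(d)\setminus N(D_2)|\le k^{t-1}(2t-1+(t-1)q)-1$. Adding the single vertex $d$ itself then yields $|N[d]\setminus N(D_2)|\le k^{t-1}(2t-1+(t-1)q)$, so the strict inequality of the lemma is exactly what absorbs the extra vertex $d$. Summing this over the $\gamma$ vertices of $D$ gives $|D_3|\le k^{t-1}(2t-1+(t-1)q)\cdot\gamma$, as claimed.

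There is no real obstacle here: all of the combinatorial work was already done in the preceding lemma, and this corollary is a bookkeeping step. The one subtlety worth flagging explicitly is the open-versus-closed neighborhood accounting described above; overlooking it would cost an additive $\gamma$ term, which the strict inequality in the lemma conveniently removes.
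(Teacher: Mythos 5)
Your proof is correct and follows essentially the same route as the paper, which also bounds the non-dominated vertices by covering them with the closed neighborhoods of the $\gamma$ vertices of $D$ and applying the preceding lemma to each. Your explicit handling of the open-versus-closed neighborhood discrepancy via the strict inequality is a nice piece of bookkeeping that the paper glosses over, but it does not change the argument.
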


\section{The algorithm}

In this final section we summarize the algorithm.

\begin{enumerate}
\item Compute the set $D_1$ of all $v$ such that $N(v)$ cannot be
dominated by $2\nabla_1(G)$ vertices different from $v$.
Remove $D_1$ from $G$ and mark all its neighbors as dominated.
\item In parallel for every vertex $v=v_1$ compute all $k$-domination
sequences $v_1,\ldots, v_s$. Add all vertices $v_s$ to the set
$D_2$. Remove $D_2$ from $G$ and mark its neighbors as
dominated. This is done as follows.

We can learn the
neighborhood $N_2[v]$ for every vertex $v$ in $2$ rounds.
In the LOCAL model we can then compute the pseudo-covers
without further communication. In two additional rounds can
compute the domination sequences from the pseudo-covers
(as we need to consider only elements from $N_2[v_1]$).
We report in $2$ additional rounds that $v_s$ belongs to $D_2$
and one more round to mark the neighbors of~$D_2$ as
dominated.
\item In the final round we add all non-dominated vertices to a set $D_3$
and return the set $D_1\cup D_2\cup D_3$.
\end{enumerate}

According to \cref{lem:neighborhood-dom1},
\cref{lem:small-D-hat}
  and \cref{crl:d3}
  the algorithm computes a $\nabla_1+2+  3(2k)^{t(2k+1)}+k^{t-1}(2t-1+(t-1)q)$
  approximation. This is an absolute constant in
  $\mathcal{O}(\nabla_1^{4t\nabla_1+t})$ depending only on $\nabla_1(G)$,
  as also $t<2\nabla_1$.

\section{Conclusion}

We simplified the presentation and generalized the algorithm of Czygrinow et al.~\cite{czygrinow2018distributed} from graph classes that exclude
some topological minor to graph classes~$\mathcal{C}$ where
$\nabla_1(G)$ is bounded
by an absolute constant for all $G\in \mathcal{C}$. This is a property
in particular possessed by classes with bounded expansion, which include
many commonly studied sparse graph classes.

It is an interesting and important question to identify the most
general graph classes on which certain algorithmic techniques work.
The key arguments of \cref{lem:neighborhood-dom1} and \cref{lem:neighborhood-dom2} work only for classes with~$\nabla_1(G)$ bounded by an absolute constant. We need different methods
to push towards classes with only $\nabla_0(G)$ bounded,
which are the degenerate classes.

The obtained bounds are still large,
but by magnitudes smaller than those obtained in the original work of
Czygrinow et al.\cite{czygrinow2018distributed}. It will also be
interesting to optimize the algorithm for planar graphs, where additional
topological arguments can help to strongly optimize constants and
potentially beat the currently best known bound of 52~\cite{lenzen2013distributed,wawrzyniak2014strengthened}.

%
%
%
 \bibliographystyle{splncs04}
 \bibliography{ref}

\begin{thebibliography}{10}
\providecommand{\url}[1]{\texttt{#1}}
\providecommand{\urlprefix}{URL }
\providecommand{\doi}[1]{https://doi.org/#1}

\bibitem{akhoondian2018distributed}
Akhoondian~Amiri, S., Ossona~de Mendez, P., Rabinovich, R., Siebertz, S.:
  Distributed domination on graph classes of bounded expansion. In: Proceedings
  of the 30th on Symposium on Parallelism in Algorithms and Architectures. pp.
  143--151 (2018)

\bibitem{akhoondian2016local}
Akhoondian~Amiri, S., Schmid, S., Siebertz, S.: A local constant factor mds
  approximation for bounded genus graphs. In: Proceedings of the 2016 ACM
  Symposium on Principles of Distributed Computing. pp. 227--233 (2016)

\bibitem{amiri2016brief}
Amiri, S.A., Schmid, S.: Brief announcement: A log*-time local mds
  approximation scheme for bounded genus graphs. In: Proc. DISC (2016)

\bibitem{amiri2019distributed}
Amiri, S.A., Schmid, S., Siebertz, S.: Distributed dominating set
  approximations beyond planar graphs. ACM Transactions on Algorithms (TALG)
  \textbf{15}(3),  1--18 (2019)

\bibitem{bansal2017tight}
Bansal, N., Umboh, S.W.: Tight approximation bounds for dominating set on
  graphs of bounded arboricity. Information Processing Letters  \textbf{122},
  21--24 (2017)

\bibitem{barenboim2018fast}
Barenboim, L., Elkin, M., Gavoille, C.: A fast network-decomposition algorithm
  and its applications to constant-time distributed computation. Theoretical
  Computer Science  \textbf{751},  2--23 (2018)

\bibitem{bronnimann1995almost}
Br{\"o}nnimann, H., Goodrich, M.T.: Almost optimal set covers in finite
  vc-dimension. Discrete \& Computational Geometry  \textbf{14}(4),  463--479
  (1995)

\bibitem{czygrinow2008fast}
Czygrinow, A., Ha{\'n}{\'c}kowiak, M., Wawrzyniak, W.: Fast distributed
  approximations in planar graphs. In: International Symposium on Distributed
  Computing. pp. 78--92. Springer (2008)

\bibitem{czygrinow2018distributed}
Czygrinow, A., Hanckowiak, M., Wawrzyniak, W., Witkowski, M.: Distributed
  approximation algorithms for the minimum dominating set in k\_h-minor-free
  graphs. In: 29th International Symposium on Algorithms and Computation (ISAAC
  2018). Schloss Dagstuhl-Leibniz-Zentrum fuer Informatik (2018)

\bibitem{dinur2014analytical}
Dinur, I., Steurer, D.: Analytical approach to parallel repetition. In:
  Proceedings of the forty-sixth annual ACM symposium on Theory of computing.
  pp. 624--633 (2014)

\bibitem{DrangeDFKLPPRVS16}
Drange, P.G., Dregi, M.S., Fomin, F.V., Kreutzer, S., Lokshtanov, D.,
  Pilipczuk, M., Pilipczuk, M., Reidl, F., Villaamil, F.S., Saurabh, S.,
  Siebertz, S., Sikdar, S.: Kernelization and sparseness: the case of
  dominating set. In: 33rd Symposium on Theoretical Aspects of Computer
  Science, {STACS} 2016. pp. 31:1--31:14 (2016)

\bibitem{eiben2019lossy}
Eiben, E., Kumar, M., Mouawad, A.E., Panolan, F., Siebertz, S.: Lossy kernels
  for connected dominating set on sparse graphs. SIAM Journal on Discrete
  Mathematics  \textbf{33}(3),  1743--1771 (2019)

\bibitem{EickmeyerGKKPRS17}
Eickmeyer, K., Giannopoulou, A.C., Kreutzer, S., Kwon, O., Pilipczuk, M.,
  Rabinovich, R., Siebertz, S.: Neighborhood complexity and kernelization for
  nowhere dense classes of graphs. In: 44th International Colloquium on
  Automata, Languages, and Programming, {ICALP} 2017, July 10-14, 2017, Warsaw,
  Poland. pp. 63:1--63:14 (2017)

\bibitem{even2005hitting}
Even, G., Rawitz, D., Shahar, S.M.: Hitting sets when the vc-dimension is
  small. Information Processing Letters  \textbf{95}(2),  358--362 (2005)

\bibitem{FabianskiPST19}
Fabianski, G., Pilipczuk, M., Siebertz, S., Torunczyk, S.: Progressive
  algorithms for domination and independence. In: 36th International Symposium
  on Theoretical Aspects of Computer Science, {STACS} 2019, March 13-16, 2019,
  Berlin, Germany. pp. 27:1--27:16 (2019)

\bibitem{gallager1983distributed}
Gallager, R.G., Humblet, P.A., Spira, P.M.: A distributed algorithm for
  minimum-weight spanning trees. ACM Transactions on Programming Languages and
  systems (TOPLAS)  \textbf{5}(1),  66--77 (1983)

\bibitem{garey1979computers}
Garey, M.R., Johnson, D.S.: Computers and intractability, vol.~174. freeman San
  Francisco (1979)

\bibitem{DBLP:conf/stoc/GhaffariKM17}
Ghaffari, M., Kuhn, F., Maus, Y.: On the complexity of local distributed graph
  problems. In: {STOC}. pp. 784--797. {ACM} (2017)

\bibitem{har2017approximation}
Har-Peled, S., Quanrud, K.: Approximation algorithms for polynomial-expansion
  and low-density graphs. SIAM Journal on Computing  \textbf{46}(6),
  1712--1744 (2017)

\bibitem{hilke2014brief}
Hilke, M., Lenzen, C., Suomela, J.: Brief announcement: local approximability
  of minimum dominating set on planar graphs. In: Proceedings of the 2014 ACM
  symposium on Principles of distributed computing. pp. 344--346 (2014)

\bibitem{johnson1974approximation}
Johnson, D.S.: Approximation algorithms for combinatorial problems. Journal of
  computer and system sciences  \textbf{9}(3),  256--278 (1974)

\bibitem{jones2017parameterized}
Jones, M., Lokshtanov, D., Ramanujan, M., Saurabh, S., Such\'y, O.:
  Parameterized complexity of directed steiner tree on sparse graphs. SIAM
  Journal on Discrete Mathematics  \textbf{31}(2),  1294--1327 (2017)

\bibitem{karp1972reducibility}
Karp, R.M.: Reducibility among combinatorial problems. In: Complexity of
  computer computations, pp. 85--103. Springer (1972)

\bibitem{kreutzer2018polynomial}
Kreutzer, S., Rabinovich, R., Siebertz, S.: Polynomial kernels and wideness
  properties of nowhere dense graph classes. ACM Transactions on Algorithms
  (TALG)  \textbf{15}(2),  1--19 (2018)

\bibitem{KuhnMW16}
Kuhn, F., Moscibroda, T., Wattenhofer, R.: Local computation: Lower and upper
  bounds. J. {ACM}  \textbf{63}(2),  17:1--17:44 (2016)

\bibitem{lenzen2013distributed}
Lenzen, C., Pignolet, Y.A., Wattenhofer, R.: Distributed minimum dominating set
  approximations in restricted families of graphs. Distributed computing
  \textbf{26}(2),  119--137 (2013)

\bibitem{lenzen2008leveraging}
Lenzen, C., Wattenhofer, R.: Leveraging linial’s locality limit. In:
  International Symposium on Distributed Computing. pp. 394--407. Springer
  (2008)

\bibitem{lenzen2010minimum}
Lenzen, C., Wattenhofer, R.: Minimum dominating set approximation in graphs of
  bounded arboricity. In: International symposium on distributed computing. pp.
  510--524. Springer (2010)

\bibitem{lovasz1975ratio}
Lov{\'a}sz, L.: On the ratio of optimal integral and fractional covers.
  Discrete mathematics  \textbf{13}(4),  383--390 (1975)

\bibitem{nevsetvril2008grad}
Ne{\v{s}}et{\v{r}}il, J., de~Mendez, P.O.: Grad and classes with bounded
  expansion {I}. decompositions. European Journal of Combinatorics
  \textbf{29}(3),  760--776 (2008)

\bibitem{nevsetvril2012characterisations}
Ne{\v{s}}et{\v{r}}il, J., de~Mendez, P.O., Wood, D.R.: Characterisations and
  examples of graph classes with bounded expansion. European Journal of
  Combinatorics  \textbf{33}(3),  350--373 (2012)

\bibitem{DBLP:conf/stoc/RozhonG20}
Rozhon, V., Ghaffari, M.: Polylogarithmic-time deterministic network
  decomposition and distributed derandomization. In: {STOC}. pp. 350--363.
  {ACM} (2020)

\bibitem{siebertz2019greedy}
Siebertz, S.: Greedy domination on biclique-free graphs. Information Processing
  Letters  \textbf{145},  64--67 (2019)

\bibitem{wawrzyniak2013brief}
Wawrzyniak, W.: Brief announcement: a local approximation algorithm for mds
  problem in anonymous planar networks. In: Proceedings of the 2013 ACM
  symposium on Principles of distributed computing. pp. 406--408 (2013)

\bibitem{wawrzyniak2014strengthened}
Wawrzyniak, W.: A strengthened analysis of a local algorithm for the minimum
  dominating set problem in planar graphs. Information Processing Letters
  \textbf{114}(3),  94--98 (2014)

\end{thebibliography}
\end{document}